\newtheorem{lemma}{Lemma}
\newtheorem{corollary}{Corollary}
\newcommand{\cg}{{\mathrm{cg}}}
\renewcommand{\[}{\begin{equation}}
\renewcommand{\]}{\end{equation}}
\newcommand{\C}{{\mathcal{C}}}
\renewcommand{\P}{{P}}
\newcommand{\CR}{{\mathcal{C}_{\mathrm{R}}}}
\newcommand{\CM}{{\mathcal{C}_{\mathrm{M}}}}
\newcommand{\bR}{{\beta_{\mathrm{R}}}}
\newcommand{\bM}{{\beta_{\mathrm{M}}}}
\newcommand{\ham}{{H}}
\newcommand{\I}{{I}}
\newcommand{\R}{{\rho}}
\newcommand{\vN}{{\mathrm{vN}}}
\newcommand{\mean}[1]{\langle#1\rangle}
\newcommand{\ccdots}{{\!\cdot\!\cdot\!\cdot\!}}
\newcommand{\HS}{{\mathcal{H}}}
\newcommand{\CE}{{C_{\!E}}}
\newcommand{\tCE}{{\tilde C_{\!E}}}
\newcommand{\K}{{\mathcal{K}}}
\definecolor{rkrPurple}{HTML}{73024F}
\tikzset{->-/.style={decoration={markings,mark=at position #1 with {\arrow{>}}},postaction={decorate}}}
\tikzset{
	partial ellipse/.style args={#1:#2:#3}{
		insert path={+ (#1:#3) arc (#1:#2:#3)}
	}
}
\definecolor{myred}{RGB}{239,59,39}
\definecolor{myblue}{RGB}{43,80,163}
\newcommand{\redleft}{\begin{tikzpicture}[shift={(-.5,-0.4)}]
            \fill[myred] (0.125,0.140) circle (0.08cm);
            \draw[-] (0,0) -- (0.5,0);
            \draw[-] (0.25,0) -- (0.25,0.25);
            \draw[-] (0,0) -- (0,0.25);
        \end{tikzpicture}}
\newcommand{\blueleft}{\begin{tikzpicture}[shift={(-.5,-0.4)}]
            \fill[myblue] (0.125,0.140) circle (0.08cm);
            \draw[-] (0,0) -- (0.5,0);
            \draw[-] (0.25,0) -- (0.25,0.25);
            \draw[-] (0,0) -- (0,0.25);
        \end{tikzpicture}}
\newcommand{\blueright}{\begin{tikzpicture}[shift={(-.5,-0.4)}]
            \fill[myblue] (0.375,0.140) circle (0.08cm);
            \draw[-] (0,0) -- (0.5,0);
            \draw[-] (0.25,0) -- (0.25,0.25);
            \draw[-] (0.5,0) -- (0.5,0.25);
        \end{tikzpicture}}
\newcommand{\redright}{\begin{tikzpicture}[shift={(-.5,-0.4)}]
            \fill[myred] (0.375,0.140) circle (0.08cm);
            \draw[-] (0,0) -- (0.5,0);
            \draw[-] (0.25,0) -- (0.25,0.25);
            \draw[-] (0.5,0) -- (0.5,0.25);
        \end{tikzpicture}}
\newcommand{\blackleft}{\begin{tikzpicture}[shift={(-.5,-0.4)}]
            \fill[black] (0.125,0.140) circle (0.08cm);
            \draw[-] (0,0) -- (0.5,0);
            \draw[-] (0.25,0) -- (0.25,0.25);
            \draw[-] (0,0) -- (0,0.25);
        \end{tikzpicture}}
\newcommand{\blackright}{\begin{tikzpicture}[shift={(-.5,-0.4)}]
            \fill[black] (0.375,0.140) circle (0.08cm);
            \draw[-] (0,0) -- (0.5,0);
            \draw[-] (0.25,0) -- (0.25,0.25);
            \draw[-] (0.5,0) -- (0.5,0.25);
        \end{tikzpicture}}
\begin{document}

\title{
Work and entropy of mixing in isolated quantum systems\\
}

\author{Budhaditya Bhattacharjee}
\email{budhadityab@ibs.re.kr}
\affiliation{Center for Theoretical Physics of Complex Systems, Institute for Basic Science (IBS), Daejeon - 34126, Korea}

\author{Rohit Kishan Ray}
\email{rkray@vt.edu}
\affiliation{Center for Theoretical Physics of Complex Systems, Institute for Basic Science (IBS), Daejeon - 34126, Korea}
\affiliation{Department of Material Science and Engineering, Virginia Tech, Blacksburg, VA 24061, USA}

\author{Dominik \v{S}afr\'{a}nek}
\email{dsafranekibs@gmail.com}
\affiliation{Center for Theoretical Physics of Complex Systems, Institute for Basic Science (IBS), Daejeon - 34126, Korea}

\date{\today}

\begin{abstract}
The mixing of two different gases is one of the most common natural phenomena, with applications ranging from $\text{CO}_2$ capture to water purification. Traditionally, mixing is analyzed in the context of local thermal equilibrium, where systems exchange energy with a heat bath. Here, we study mixing in an isolated system with potentially non-equilibrium initial states, characterized solely by macroscopic observables. We identify the entropy of mixing as a special case of observational entropy within an observer-dependent framework, where both entropy and extractable work depend on the resolution of measurement. This approach naturally resolves the Gibbs mixing paradox in quantum systems: while an observer experiences a discontinuous increase in entropy upon learning of the existence of two particle types, this knowledge does not provide an advantage in work extraction if the types of particles remain operationally indistinguishable in their measurements. Finally, we derive a Landauer-like bound on the difference in energy extracted by two observers, where an ‘observational temperature’ emerges, determined by the accessible information. These results provide a foundation for rigorously determining the energy required to unmix in non-equilibrium settings and extend beyond quantum systems, offering insights into the thermodynamics of isolated classical gases.
\end{abstract}

\maketitle

\section{Introduction}

Mixing two types of particles is one of the most common occurrences in nature. Whether it is the down-to-earth act of putting milk into our coffee or the release of $\text{CO}_2$ into the atmosphere, the underlying physical laws remain the same. Such processes also have practical implications—most notably in determining how much energy is needed to separate two compounds once mixed. For gases, this minimal energy has a fundamental bound, revealing that capturing $\text{CO}_2$ \emph{before} it is released is far more efficient than attempting to remove it afterward~\cite{carbon2020}. A similar principle applies to other modern environmental challenges, such as desalination~\cite{KARAGIANNIS2008water} or filtering microplastics~\cite{chandra2024microplastics} and hormones~\cite{gonsioroski2020endocrine} from our water supply, where proactive measures can significantly reduce both costs and ecological impacts.

In classical physics, where gases are assumed to be in thermal equilibrium at a well-defined temperature, the energy cost of mixing is well understood. However, in systems that do not equilibrate or are not connected to a thermal bath the usual assumptions break down. Consequently, it becomes unclear how much energy is required to separate the gases. This challenge is especially pronounced in isolated quantum systems, which are now routinely realized and studied in experimental settings~\cite{gong2021experimental, bernon2013, leonard2023}.

These questions are also fundamentally observer-dependent: to study how much energy it costs to separate the two gases, one must first define what it means ``to separate.'' Typically, this notion carries a connotation of spatial separation, such as one type of gas being confined within a box while the other remains outside. Even in this case, however, the energy cost depends on the size of the box, which is determined by the observer or the agent attempting to separate the gases~\cite{blundell2010concepts}. This definition is inherently tied to what the agent can achieve, shaped by their specific capabilities and limitations in carrying out the separation~\cite{jaynes1992gibbs}.

This interplay between the observer’s experimental capabilities and knowledge lies at the heart of the Gibbs mixing paradox (GMP). Consider two observers, one who can distinguish the two gases and the other who cannot. Recognizing the difference, the first observer assigns a higher entropy to the observed state, based on the larger number of microstates in compliance with any macroscopic observation. The second observer perceives no distinction and assigns a lower entropy, asserting that all particles are of the same type. Crucially, the first observer’s knowledge provides an advantage---if the gases are spatially separated, they can extract energy by allowing the gases to mix (classically, through semi-permeable membranes). The second observer, unaware of the distinction, cannot exploit this opportunity. As E.T. Jaynes noted~\cite{jaynes1992gibbs}: ``We would observe, however, that the number of fish that you can catch is an `objective experimental fact'; yet it depends on how much `subjective' information you have about the behavior of fish.''

While the role of different observers is widely acknowledged in these discussions, their treatment remains somewhat ambiguous in the absence of a formal framework defining what constitutes an ``observer.'' Instead, the situation is typically described qualitatively, with explanations stating that one observer can distinguish the two gases while the other cannot. From this premise, the appropriate definition of equilibrium entropy is either postulated or argued for, but it is not systematically deduced from the original statement. This lack of rigorous derivation leaves the connection between the observer's capabilities and the entropy definition somewhat imprecise. Does ``being able to distinguish'' imply the ability to probe each particle individually and determine its type, or does it refer to probing a specific region of space to ascertain the proportion of particles of each type within it? These two interpretations are fundamentally different and lead to distinct outcomes, both in terms of the observer's knowledge and the resulting thermodynamic description of the system.

This also highlights another aspect of the paradox: the abrupt, discontinuous change in entropy that occurs when particles are distinguishable, even if differences between them are minimal. The mere potential for distinction leads to a significantly larger associated entropy. This apparent paradox arises from treating entropy as a fundamental property of the physical system. However, if entropy is instead understood as an informational measure---reflecting the observer’s perceived ignorance about the system and defining the limits of their ability to extract energy---the paradox dissolves. In this interpretation, entropy is relational, shaped by the observer's knowledge and capabilities rather than an intrinsic characteristic of the system.

In this paper, we formalize the concept of an observer and their capabilities in precise mathematical terms. This systematic approach removes the need for overly intuitive reasoning. Instead, it allows us to directly approach the question with higher clarity and precision. Thereby it allows us to quantify the additional energy that an observer with a higher resolution can extract from a system compared to an observer with a lower resolution. As an application, we analyze the difference in extracted energy between two observers in the Gibbs mixing scenario, in the generic 1D lattice model with fermions of two colors. Additionally, we focus on isolated quantum systems and compute the difference in energy that an observer can extract before and after the two gases mix, thereby establishing a lower bound on the energy required to separate them again.

Interestingly, this result bears a strong mathematical resemblance to Landauer's bound, with a key difference: instead of the bath temperature, an ``observational temperature'' emerges, defined by the observer's experimental capabilities and limitations. Furthermore, since equivalent reasoning applies to classical systems, the insights gained extend beyond the quantum case. The framework allows us to discuss the energy required to separate isolated classical gases out of equilibrium, broadening the applicability of the results.

First, we review the background including mixing and observational entropy (Sec.~\ref{sec:literature}). Then, we derive the difference between extracted energy by two observers (Sec.~\ref{sec:energydiff}), with a specific result resembling Landauer's bound for observers that have similar capabilities (Sec.~\ref{sec:energydiffsim}), followed by the difference in extracted energy by the same observer at different times (Sec.~\ref{sec:difference_in_times}). In Sec.~\ref{sec:mixingentropy}, we analyze entropy in the Gibbs mixing scenario in quantum systems for an observer that can distinguish the particles (Sec.~\ref{sec:mixingentropyrick}) and three versions of an observer who cannot (Sec.~\ref{sec:mixingentropymorty1},~\ref{sec:mixingentropymorty2},~\ref{sec:mixingentropymorty3}), and compare them (Sec.~\ref{sec:entropydifference}). We also analyze the change in entropy for these observers when an isolated quantum system evolves~\ref{sec:timeevolution}, showing that for all observers with various degrees of ignorance, the observed change in entropy is the same. Then we combine the previous results to show the difference between extracted energy in the Gibbs mixing scenario (Sec.~\ref{sec:workdiffmixing}). In particular, we outline the assumptions under which we prove that an observer unaware of the two particle types extracts the same amount of energy as one who is aware, thereby resolving the Gibbs mixing paradox. We also include two simulations: first, the difference between two observers, and second, the difference for the same observer but at different times. 
Finally, we conclude (Sec.~\ref{sec:conclusion}).

\begin{figure}
    \centering
    \includegraphics[width=0.45\textwidth]{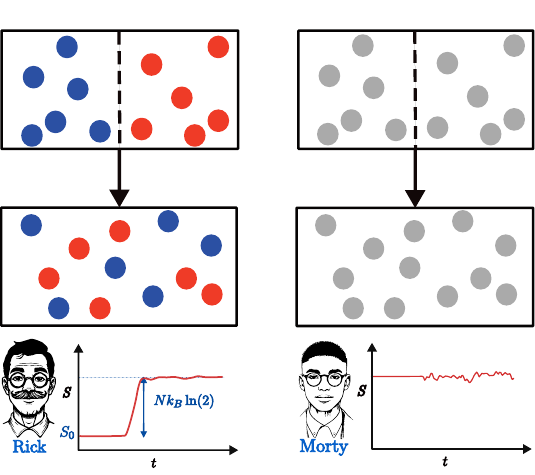}
    \caption{\label{fig:mgmp_schematic} A schematic representation of the Gibbs mixing paradox. Rick can distinguish the particles in the two partitions based on their colors, whereas Morty cannot. As a result, when the partition is removed, Rick observes a net increase in entropy due to the mixing of two distinct gases, while Morty, unable to perceive the distinction, does not observe the same change in entropy.}\label{fig:gibbssketch}
\end{figure}

\section{Background}\label{sec:literature}

Consider two ideal gases—initially separated by a partition, both with identical volume, pressure, and temperature—that are allowed to mix. The Gibbs Mixing Paradox (GMP) arises because of the differing observer abilities. An Observer Rick, who can distinguish the gases notices an entropy increase from mixing. Morty, on the other hand, cannot distinguish them and thus attributes no entropy change. This discontinuity depends on the gases being distinguishable, challenging classical thermodynamic notions of entropy (see Fig. \ref{fig:gibbssketch}). Note here, that we consider distinguishability between the gases rather than their constituent individual particles, which may be fundamentally indistinguishable in quantum physics. In second quantization, this corresponds to being able to distinguish different spatial modes or different values of spin.

\textbf{Gibbs Mixing Paradox.}
Jaynes~\cite{jaynes1992gibbs} framed entropy as a subjective measure tied to an observer’s coarse-graining of microstates. For Rick, reverting the system to its original state requires work proportional to the entropy change; for Morty, simply reinserting the partition suffices. This perspective anchors entropy in information. Building on Jaynes' perspective, many authors have explored the GMP from both philosophical and physical standpoints~\cite{tatarin_1999_entropy, allahverdyan_2006_explanation, ben-naim_2007_socalled, lin_2008_gibbs, maslov_2008_solution, swendsen_2008_gibbs, cheng_2009_thermodynamics, enders_2009_gibbs, nagle_2010_defense, peters_2013_demonstration, saunders_2018_gibbs, darrigol_2018_gibbs, dieks_2018_gibbs, swendsen_2018_probability, ihnatovych_2023_gibbs, baker_2024_how, lairez_2024_thermostatistics}. These approaches can be further categorized. Philosophical approaches focus on particle identity and the foundation of entropy. For example, \citet{maslov_2008_solution} emphasizes contextual particle distinguishability, while \citet{ihnatovych_2023_gibbs} critiques classical entropy additivity. From an information-theoretic perspective, \citet{lairez_2024_thermostatistics} aligns with Jaynes. Operational approaches resolve the discontinuity pragmatically. \citet{allahverdyan_2006_explanation} replace coarse-graining with ergotropy (extractable work), showing it decreases continuously as gases become indistinguishable. \citet{peters_2013_demonstration} attributes entropy changes to uncertainty in particle locations. Also, in agreement with Jaynes, \citet{yadin_2021_mixing} reconcile observer-dependent work extraction with respect to a single heat bath in a first-quantization description. With a setup similar to the Hong-Ou-Mandel experiment, they argue that Morty can also extract work, contrary to the predictions of the classical GMP.

\textbf{Mixing Beyond GMP.} Mixing has also been studied independently. \citet{yoshida_2022_work} quantify free energy changes in small quantum systems, while \citet{takakura_2019_entropy} explore entropy in generalized probabilistic theories. \citet{sasa_2022_quasistatic} analyze quasistatic work in quantum thermal processes. These works highlight broader thermodynamic implications of mixing beyond classical paradoxes.

\textbf{Observational entropy.} Von Neumann argued that von Neumann entropy does not accurately represent thermodynamic entropy, as it assigns zero entropy to all pure states and fails to capture the information loss that arises when only macroscopic degrees of freedom are observed. As an alternative, he introduced macroscopic entropy~\cite{vonNeumann1929translation,von1955mathematical}, which uses state-counting similar to Boltzmann entropy. This concept has recently gained renewed interest and been generalized to observational entropy~\cite{safranek2019a,safranek2019b,safranek2021brief,SW21,safranek2021generalized,buscemi2022observational,bai2024observational}, leading to a range of applications~\cite{riera2020finite,strasberg2021clausius,sreeram2023witnessing,sreeram2024dichotomy,chakraborty2024sample,nagasawa2024generic,Xuereb2024,meier2025emergence,schindler2025unification}. 

For a projective measurement defining a complete set of orthogonal projectors, known as coarse-graining $\C=\{\P_i\}$, observational entropy is defined as
\[
S_\C=-\sum_i p_i \ln p_i+ \sum_i p_i \ln V_i.
\]
Here, $p_i=\tr [\P_i \R]$ represents the probability of the system being in macrostate $i$, while $V_i=\tr [\P_i]$ denotes the volume of the macrostate, i.e., the number of microstates (pure states) contained within it. Each projector also corresponds to a subspace, defined by its support. Thus, we can equivalently write that a coarse-graining corresponds to a partition of the Hilbert space into subspaces, $\HS=\bigoplus_i \HS_i$, where $V_i=\dim \HS_i$. The first term in observational entropy corresponds to the Shannon entropy, capturing the uncertainty in identifying the macrostate to which the system belongs. The second term represents the mean Boltzmann entropy, quantifying the residual uncertainty in determining the specific microstate due to the coarse-grained nature of the measurement.

\begin{figure}[t!]
    \centering
    \begin{tabular}{c c}
        \rotatebox{90}{~~~quantum} &~
        \includegraphics[width=0.92\linewidth]{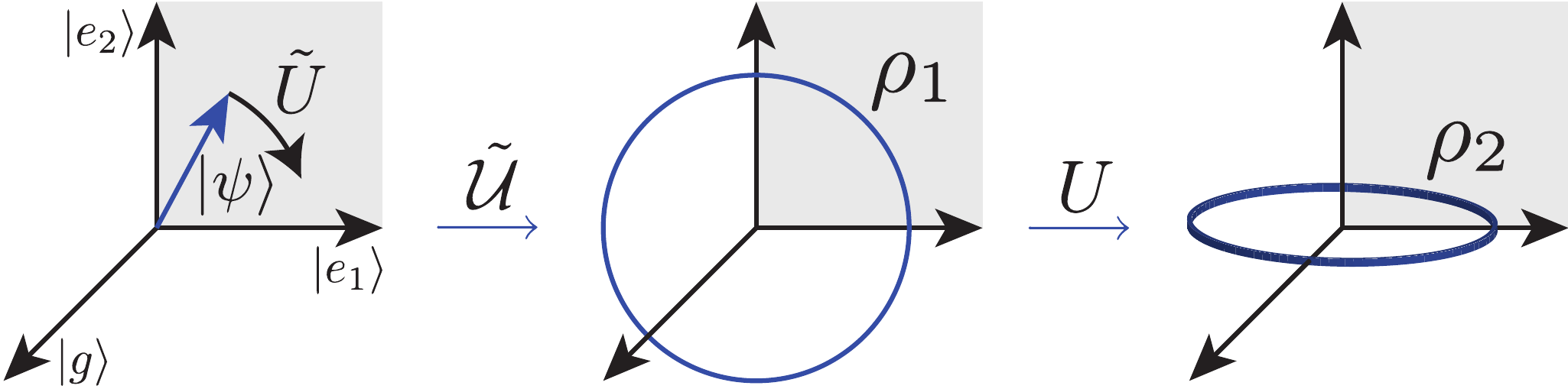}
    \end{tabular}
    
    \vspace{10pt}

    \begin{tabular}{c c}
        \rotatebox{90}{~~~~classical} &~
        \includegraphics[width=0.92\linewidth]{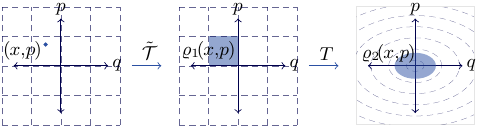}
    \end{tabular}
    
    \caption{Sketch of the extraction protocol in quantum (top panels) and classical (bottom panels) isolated systems. First, a coarse measurement is performed, which provides probability of finding the system in a given macrostate. In the quantum case, we sketch the Hilbert space with the measurement basis given by a set of projectors. In the example, we have chosen the case where the observer cannot distinguish the two excited state from each other but can distinguish them from the ground state, corresponding to coarse-graining
    $\C=\{\ketbra{g}{g}, \ketbra{e_1}{e_1}+\ketbra{e_2}{e_2}\}$. The first projector projects onto the x axis and the second to the y-z plane. We start from an initial state $\ket{\psi}=\frac{1}{\sqrt{3}}\ket{e_1}+\sqrt{\frac{2}{3}}\ket{e_2}$ which is unknown to the observer, represented by a vector lying in y-z plane. First, a random unitary $\tilde U$ is applied, which effectively spreads this state over a macrostate (i.e., on average, as described by operation $\mathcal{\tilde U}=\int\tilde{U}\bullet \tilde{U}^\dag d\mu(\tilde U)$). The resulting state $\R_1=\frac{1}{2}(\ketbra{e_1}{e_1}+\ketbra{e_2}{e_2})$ is known to the observer, represented by a circle in the y-z plane. Finally, an extraction unitary $U$ is applied, which lowers the energy of the state, extracting it in the process, yielding $
    \R_2=\frac{1}{2}(\ketbra{g}{g}+\ketbra{e_1}{e_1})$ (represented by a circle lying in the x-y plane). In the classical case, we sketch the phase space with the coarse-graining given by dividing it into cells (dashed grid). Starting from some unknown initial state, given by a point in phase space (blue dot), a random canonical transformation is applied, spreading the state over the macrostate (blue box). Then, a canonical extraction transformation is applied which reduces its energy (blue elliptic disk). This represents the example of a classical harmonic oscillator for which the energy shells form ellipses, with the smallest energy concentrated in the middle.}
    \label{fig:extraction}
\end{figure}

\section{Difference in the work extracted between two observers}\label{sec:energydiff}

In this section, we derive the difference in work extracted by two observers with different observational capabilities.

Consider two observers, Rick and Morty, with two measuring devices of different precision. These measuring devices correspond to coarse-grainings $\CR$ and $\CM$, We define the difference in observational entropies as
\[\label{eq:deltaS}
\Delta S\equiv S_\CM-S_{\CR},
\]
We assume that the measurement of Rick has a higher precision as measured by lower observational entropy, meaning that $\Delta S\geq 0$. For what follows, we require this assumption only for the particular quantum state from which the observers will extract energy, while for other states this inequality may be reversed. This will make all other quantities of type $\Delta$ introduced below positive. 

The work extracted by performing a joint extraction on a large number of copies of the system, which is characterized by the measurement probabilities in a single coarse-grained basis $\C$, is given by observational ergotropy~\cite{safranek2023work},
\[\label{eq:extracted_work}
W_\C^{\infty}=\tr[\ham(\R-\R_{\beta})].
\]
There, $\R_{\beta}=\exp(-\beta \ham)/Z$ is the thermal state with inverse temperature $\beta=1/kT$ implicitly defined by the von Neumann entropy, as a solution to $S_{\vN}(\R_{\beta})=S_\C$, and $k$ is the Boltzmann constant. We call temperature $T$ obtained by solving this equation an \emph{observational temperature}. Expression $\tr[\ham\R]$ represents the initial energy of the state from which the energy is extracted, while $\tr[\ham\R_{\beta}]$ is the average energy of the final state after the extraction protocol is employed.

\begin{table*}
    \centering
    \begin{tabular}{l|c|c}
        Type of system & Heat capacity & Work difference in units of $kT\Delta S$\\
        \hline
        Ideal gas & $\frac{3}{2}N k$ & $1-\frac{1}{3}\frac{\Delta S}{N}+\frac{2}{27}\left(\frac{\Delta S}{N}\right)^2$ \\
        Debye model at low temperatures & $324 N k \left(\frac{T}{T_D}\right)^3$   & $1-\frac{1}{2}\frac{\Delta S }{324 N }\frac{T_D^3}{T^3}-\frac{1}{3}\left(\frac{\Delta S }{324 N }\frac{T_D^3}{T^3}\right)^2$ \\
        Quantum Critical Metals & $\frac{\pi}{6} m N  T \left(1+\frac{A}{N T^{1/3}}\right)$   & $1-\frac{3 k}{\pi m}\frac{\Delta S}{N T\left(1+\frac{A}{N T^{1/3}}\right)}+\frac{2k^2}{\pi^2 m^2\left(1+\frac{N T^{1/3}}{A}\right)}\left(\frac{\Delta S}{N T\left(1+\frac{A}{N T^{1/3}}\right)}\right)^2$ \\
        Liquid Helium above the lambda point &  $N k (A+B t^{-\alpha})$ & $1-\frac{1}{2}\frac{\Delta S}{N(A+B t^{-\alpha})}+\frac{1}{6}\left(1+\frac{(1+t) \alpha}{(1+\frac{A}{B}t^{\alpha})t}\right)\left(\frac{\Delta S}{N(A+B t^{-\alpha})}\right)^2$ \\
        s-wave superconductor at low temperatures & $A N k \sqrt{T}e^{-\Delta/kT}$ & $1-\frac{1}{2}\frac{\Delta S e^{\Delta/kT}}{A N \sqrt{T}}+\frac{1}{12}\left(1-\frac{2\Delta}{k T}\right)\left(\frac{\Delta S e^{\Delta/kT}}{A N \sqrt{T}}\right)^2$ \\
    \end{tabular}
    \caption{Work difference as per Eq.~\ref{eq:energy_diff2} in units of $kT\Delta S$ for a variety of classical and quantum gases. $A$, $B$, $\Delta$, $m$ and $T_D$ are constants that depend on the specific model and $T$ denotes observational temperature. For the liquid helium we defined $t=T/T_c - 1$ with $T> T_c$.     For liquid Helium, the second factor diverges as the observational temperature approaches the critical temperature. For others, the expansion diverges as temperature converges to absolute zero. As a result, the expansion holds only for very small entropy differences around these critical points. Additionally, this indicates that there is a critical amount of information, given by $S_
    \C=S_{\vN}(\R_{\beta_c})$, which the observer is required to possess in order to extract significantly more energy.}
    \label{tab:expansions}
\end{table*}

The work extraction protocol is partially random, meaning that one first employs a randomizing unitary which washes away the information about the degrees of freedom that cannot be obtained from the coarse measurement $\C$, and then the global work extraction unitary that is optimized for the average state. In the language of quantum reference frames, the first step is known as ``twirling'' and encodes the lack of a quantum reference frame~\cite{Bartlett2007}. Observational ergotropy, Eq.~\eqref{eq:extracted_work}, then represents the average extracted work upon repetition of this protocol many times. See Fig.~\ref{fig:extraction} for the sketch of this extraction protocol. For more details see Refs.~\cite{safranek2023work,safranek2023ergotropic,chakraborty2024sample}.

The difference in extracted energies by the two observers is equal to,
\[\label{eq:work_difference}
\Delta W \equiv W_{\CR}^{\infty} - W_{\CM}^{\infty}=\tr[\ham(\R_{\bM}-\R_{\bR})],
\]
where $S_{\vN}(\R_{\bR})=S_\CR$ and $S_{\vN}(\R_{\bM})=S_{\CM}$.

This formula holds completely generally, for any two coarse-grainings. The positive difference in entropies $\Delta S$ implies that work extracted $\Delta W$ is also positive.

The formula for the observational ergotropy fundamentally depends on the initial energy, which is unknowable to the observers given that they obtain only limited information by measuring in their basis, $\CR$ and $\CM$, respectively. It can, however, be estimated from these measurements~\cite{safranek_2023_measuring}. On the other hand, the formula for the difference no longer depends on this initial energy, only on the respective observational entropies, which allows us to study it in more detail.

Note that in some scenarios, especially when the initial state energy is small, execution of the protocol may cost more energy than is being extracted for both observers. The positive difference in extracted work $\Delta W$ then means that Rick has lost less energy than Morty by executing the protocol.

\subsection{Different observers with similar capabilities}\label{sec:energydiffsim}

We show that when the difference in observational entropies is small, the extracted work difference is given by an expression that strikingly resembles Landauer's principle.

When $\Delta S$ is small, then the difference between corresponding temperatures are small, too. We define inverse observational temperatures of Rick and Morty implicitly via $S_{\vN}(\R_{\bR})=S_\CR$ and $S_{\vN}(\R_{\bM})=S_{\CM}$, respectively. The temperature difference
\[\label{eq:temperature_difference}
\Delta \beta \equiv \bR-\bM,
\]
is positive for positive $\Delta S$, implying $T_R\leq T_M\equiv T$ (meaning that $T$ is the observational temperature related to the higher of the two observational entropies). Further, the temperature difference vanishes as $\Delta S$ vanishes.

We derive the work difference as an expansion in small $\Delta S$~\footnote{In the example of the mixing paradox below, in the initial situation of the macroscopic non-equilibirum state with equal-sized boxes and non-dilute scenario, $\Delta S$ is about a third of $S$, so this condition is not satisfied. Including higher order terms will yield better agreement with the exact formula, Eq.~\eqref{eq:work_difference}, and we employ empirical approaches to do that. For the unequal sized boxes, or sufficiently dilute gas, this condition will generally be satisfied.}. To do that, we first insert the identities for observational entropy below Eq.~\eqref{eq:work_difference} into Eq.~\eqref{eq:deltaS} and expand it in the powers of $\Delta \beta$ up to the third order coefficient. Then, we solve this equation for $\Delta \beta$ while ignoring the higher order terms, yielding an expansion $\Delta \beta(\Delta S)$. Second, we expand the work difference, Eq.~\eqref{eq:work_difference}, in terms of $\Delta \beta$. 

Inserting the first expansion into the second, we obtain an expression for the ergotropy difference,
\[\label{eq:energy_diff1}
\Delta W\!=\!kT\Delta S\!\left(1\!-\!\frac{(kT)^2}{2v}\Delta S\!+\!\frac{(3kT\!+\!X)(kT)^3}{6v^2}\Delta S^2\!+\ccdots\right)\!.
\]
Here, $v=-\frac{d\mean{E}}{d\beta}=\mean{E^2}-\mean{E}^2$ is the variance in energy of the thermal state corresponding to temperature $\beta$, and $X=\frac{d \ln v}{d\beta}=(-\mean{E^3}+3\mean{E}\mean{E^2}-2\mean{E}^3)/v$, a term which encodes the third moment in energy.

Realizing that the variance is related to the heat capacity $\CE=\partial E/\partial T$ through $v=-\frac{d\mean{E}}{d \beta}=\CE k T^2$, we obtain a more physically-relevant expression for the ergotropy difference,
\[\label{eq:energy_diff2}
\Delta W\!\!=\!kT\Delta S\!\left(\!1\!-\!\frac{1}{2}\frac{\Delta S}{\tCE}\!+\!\frac{1}{6}\!\left(1\!-\!\frac{\partial \ln \tCE}{\partial T} T\right)\!\left(\frac{\Delta S}{\tCE}\right)^{\!2}\!\!\!+\ccdots\right)\!.
\]
This represents our main result. $\tCE=\CE/k$ is the dimensionless heat capacity, which measures the total number of degrees of freedom that contribute to the heat capacity.

The zeroth order term strikingly resembles Landauer's principle about the minimal required energy to erase a bit of data. There are a few notable differences: 

First, this formula represents the maximal extractable work given the information at hand, which is quantified by observational entropy. The work is extracted by applying the extraction unitary, which is dictated by the measurement outcome. This is similar to the Szilard engine. Also there, the position of the particle is determined by the measurement. Based on the outcome, the weight is attached to the piston at the correct side and is lifted. The maximal extracted work for the Szilard engine has been extensively studied~\footnote{See reviews \cite{sagawa_2012_thermodynamics,parrondo_2014_thermodynamics,davies_2021_the} and papers for quantum~\cite{Berut_2012_experimental,koski_2014_experimental,peterson_2016_experimental,boyd_2016_maxwell,boyd_2017_transient,masuyama_2018_information,vanhorne_2020_single-atom} (Maxwell's Demon, Szilard Engine and Landauer Principle), classical~\cite{marathe_2010_cooling,tamir2018experimental,vaikuntanathan_2011_modelling}, and active matter \cite{sosuke_2013_information,malgaretti_2022_szilard,chor_2023_many-body} systems.}.

The difference to the Szilard engine is that this result applies to isolated systems, while the Szilard engine operates at a fixed temperature of the surrounding thermal bath $T$. Here, in contrast, the temperature $T$ is not a physical temperature of a thermal bath, but rather the ``informational'' temperature corresponding to the observational entropy. To be more precise, we identified the temperature $T$ as the higher of the two observational temperatures, see Eq.~\eqref{eq:temperature_difference}. Given that we assume the entropy difference $\Delta S$ to be small, any difference between the observational temperatures remains small as well. 

Physically, the observational temperature is the temperature of the resulting final thermal state when the randomized extraction protocol is employed. As already mentioned above, if the mean energy of such a thermal state is higher than the initial mean energy, employing the protocol results in a net energy investment. If it is lower, energy is extracted. A positive ergotropy difference then means that Rick has gained $\Delta W$ amount more energy (or lost less energy) than Morty when employing the protocol.

The first correction term is always negative, meaning that $k T \Delta S$ bounds the work difference $\Delta W$ from above. This term also implies that the work difference is the largest when $(kT)^2\ll v$, meaning that energy fluctuations are much larger than the thermal energy scale. Equivalently, these are the situations when the heat capacity grows very large or diverges. This happens, for example, for systems near a critical point or in a low-temperature regime with large degeneracy~\cite{yang2001low,lipa2003specific,lashley2003critical,liang_2015_heat,zhang2023free}.

The second-order correction can be both positive or negative, and it will generally depend on how the heat capacity scales with temperature. Considering power law $\tCE \propto T^\alpha$, the correction is negative for a superlinear scaling $\alpha > 1$, meaning that the number of degrees of freedom contributing to the internal energy grows faster than linearly with temperature. It is positive otherwise. A few examples of classical and quantum gases with varying heat capacities are illustrated in Table~\ref{tab:expansions}.

Finally, note that there was not much particularly quantum about this calculation, apart from the unitary extraction when employing the protocol. Observational entropy can be consistently defined also for classical systems~\cite{safranek2020classical}. Furthermore, classical equivalents of unitary operations are symplectic (canonical) transformations, which preserve all phase-space volume, the symplectic product between the two pase-space vectors, and the relationship between conjugate variables (the Poisson brackets), sketched in Fig.~\ref{fig:extraction}. One then infers that this result also applies identically to classical isolated systems of indistinguishable particles.

\subsection{Work extracted at different times}\label{sec:difference_in_times}

Equivalent calculations can also be performed for the same observer who extracts work at different times. Consider an initial state $\R_0$, describing, for example, particles not being mixed yet. After some time, the system evolves into state $\R_1$, e.g., a state where particles have mixed already. The ergotropy difference can, in that case, also describe how much the same observer can extract at different times. Defining two observational entropies and the corresponding thermal state $S_{\vN}(\R_{\beta_i})=S_\C(\R_i)$, $i=0,1$, the ergotropy difference is given by $\Delta W=\tr[\ham(\R_{\beta_1}-\R_{\beta_0})]$. The same approximate expression, Eq.~\eqref{eq:energy_diff2}, also holds.

This indicates that an observer at an earlier time, when the particles are separated, can extract $\Delta W$ more energy than at a later time when all particles are mixed. Then, one could argue, it must cost \emph{at least as much energy} to separate the two gases, meaning that the ergotropy difference also gives the lower bound on the amount of work the observer needs to perform to macroscopically separate these gases. This would hold generally, not only for mixing gases: to undo the macroscopic changes observed by an observer with an ability to do a macroscopic measurement $\C$, they must spend at least $\Delta W$ amount of energy to do so.

While this is intuitively reasonable, there is a caveat: the difference in extracted work assumes the same protocol is applied at both times. However, a different protocol might reverse the macroscopic changes at a lower cost. In principle, such a protocol must exist---if the full quantum state were known, the reversal could be performed without extra energy. Morally, we would argue that with limited information, no protocol should achieve this reversal at an average cost lower than $\Delta W$. A rigorous proof is left for future work.

\section{Observational entropy of quantum mixing}\label{sec:mixingentropy}

We derive the entropy difference $\Delta S$, which we can then input into the formula for the ergotropy difference, for a quantum version of the Gibbs mixing paradox. 

We consider observers illustrated in Fig.~\ref{fig:rick_oneD_main}: Rick, who is able to determine the total number of particles of each color in both the left and the right side of the box; and three clones of Morty, who can identify the total number of particles in each part of the box but not their color, but have varying degrees of knowledge about colors of particles in the full system. Morty 1: knows the total number of blue particles in the full system; Morty 2: does not know the total number of particles in the full system, but is aware that these two types exist; and Morty 3: is not even aware of the existence of the two types of colors and as a result, will associate incorrect volumes to the observed macrostates.

The ability to determine or identify is described by a specific coarse-graining representing a macroscopic measurement that goes into the definition of observational entropy that either of them associates to the observed system.

In the following calculations, we consider both the blue and the red particles to be either fermions or hard-core bosons, which will affect the counting, however, a very similar derivation can also be done with bosons. The results for these will coincide when the gas is sufficiently dilute.

We consider the total Hilbert space to be that of any possible configuration of any number of blue and red particles on the lattice of $L$ sites -- i.e., spatial modes in second quantization. Here, color could be a spin degree of freedom (immutable within the observers capability; evolved with a spin-preserving Hamiltonian). The actual size of the total Hilbert space will not actually matter, as long as every state that we consider can be contained within it. For instance, when considering Hamiltonians that preserve the total number of particles of each type, we can restrict to a Hilbert space that respects this symmetry. However, both of these approaches will give the same results since these come from counting the dimensions of the macrostates and not the dimension of the full Hilbert space.

For simplicity, we assume that both sides are of the same lattice size, $w$. The non-symmetric case, of $L_A$ and $L_B$ being the size of the left and the right side, respectively, can be obtained by substituting $\ln w \rightarrow a\ln L_A+(1-a) \ln L_B$ into the results below ($a$ is defined below).

\begin{figure}[t!]
    \centering
    \includegraphics[width=0.8\linewidth]{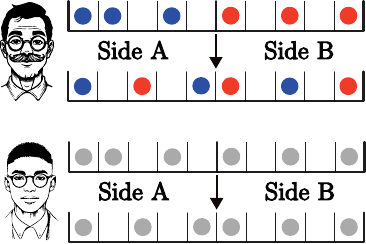}
    \caption{The schematic for counting the states for Rick. In this illustration, the number of sites on each side is $\omega=5$. Following the notation in the text, we can see that the number of blue particles is $N_+ = 3$, the number of red particles is $N_-=3$, the number of particles on the left side is $N_A=3$, and the number of particles on the right $N_B=3$. Colors mix as the system evolves and Rick observes an increase in entropy.}
    \label{fig:rick_oneD_main}
\end{figure}

\subsection{Rick}\label{sec:mixingentropyrick}

Rick can observe the total number of particles of each color on both the left and the right side of the box; see Fig.~\ref{fig:rick_oneD_main} (top panel). The non-symmetric case is derived in Appendix~\ref{app:rick_nonsymmetric}.

Rick's macrostate is given by four numbers, the total number of blue and red particles, on the left and on the right side. This macrostate is mathematically described by a subspace, 
\[\label{eq:HS}
\HS^{\mathrm{R}}=\underset{\blueleft}{\HS_+^A}\otimes \underset{\redleft}{\HS_-^A}\otimes\underset{\blueright}{\HS_+^B}\otimes \underset{\redright}{\HS_-^B}\ .
\]
The Hilbert spaces on the right correspond to blue (+) particles on the left side, red (-) on the left, blue on the right, and red on the right, from left to right. 

To simplify calculations, we introduce an equivalent set of different numbers: $i$, the number of blue particles on the left side; $N_+$, the total number of blue particles; $N_A$, the total number of particles on the left; and $N$, the total number of particles. From these, the total number of red particles, $N_-$, and the total number of particles on the right, $N_B$, is determined by $N_+ +N_-=N_A+N_B=N$.

The volume of this macrostate, parametrized by the tuple $(i,N_+,N_A,N)$, is given by
\[\label{eq:volume_rick}
V^{\mathrm{R}}=\underset{\blueleft}{\mqty(w\\i)}\underset{\redleft}{\mqty(w\\N_A-i)}\underset{\blueright}{\mqty(w\\N_+-i)}\underset{\redright}{\mqty(w\\N-N_+-(N_A-i))}.
\]
The binomial coefficients are the dimensions of the respective Hilbert spaces in Eq.~\eqref{eq:HS}, from left to right. We also denote this volume by a subscript ``init'' since we assume that when modelling time evolution, the system starts fully in this macrostate.

Assuming a diluted case, in which $w$ is significantly larger than other variables, we can apply Stirling's approximation, as detailed in Appendix~\ref{app:rick}. We derive
\[\label{eq:rick_final_main}
\ln V^{\mathrm{R}}
=N(\ln w-\ln N+S(q)+1)-\frac{1}{2}\ln \Big((2\pi N)^4 \prod_{j=1}^4 q_j \Big).
\]
Here, $q_1=p_i$, $q_2=a-p_i$, $q_3=r-p_i$, and $q_4=1-r-a+p_i$, $S(q)=-\sum_i q_i\ln q_i$ denotes the Shannon entropy. We defined $p_i=i/N$ as the proportion of blue particles on the left (out of all particles). We also define proportion of all particles on the left, $a=N_A/N$, and proportion of the blue particles out of all $r=N_+/N$. 

The last term of Eq.~\ref{eq:rick_final_main} represents logarithmic corrections. In those, the possible divergences when some $q_j=0$ result from the use of Stirling's approximation. Upon closer inspection of these boundary cases one finds that the general formula can still be used by subtracting $-\frac{1}{2}\ln(2\pi N q_j)$ for each $q_j=0$, see the end of Appendix~\ref{app:rick}.

\subsection{Morty 1}\label{sec:mixingentropymorty1}

Morty 1 can observe three macroscopic degrees of freedom: the total number of particles on the left $N_A$, the total number of particles on the right $N-N_A$, and the total number of blue particles, $N_+$.

We again have $N_+ +N_-=N_A+N_B=N$. Without loss of generality, we define $N_A$ and $N_+$ to be the smaller of the two, i.e., $N_A\leq N_B$ and $N_+\leq N_-$. This guarantees that $N\geq N_A+N_+$ for the expression to be well defined.
\[\label{eq:HSmorty}
\HS^{\mathrm{M}_1}=\bigoplus_{i=0}^{\min\{N_+,N_A\}}\HS^{\mathrm{R}}.
\]

The volume corresponding this macrostate, parametrized by a tuple $(N_+,N_A,N)$, is given by the sum over all smaller macrostates (of Rick) that lead to the same macroscopic observation for Morty,
\[\label{eq:general_volume_morty2_main}
V^{\mathrm{M}_1}=\sum_{i=0}^{\min\{N_+,N_A\}}V^{\mathrm{R}}
\]

Assuming the dilute limit, when taking the logarithm, the dominant term will be only the largest term of the sum, while all others will contribute only as a logarithmic correction. To compute this correction, we use the Laplace method which involves approximating the binomials with a Gaussian around the maximum and approximating the sum with an integral. Finally, we obtain
\[
\begin{split}
    \ln V^{\mathrm{M}_1}&=N\left(\ln w-\ln N +S(a)+S(r)+1\right)\\
    &-\frac{3}{2}\ln(2\pi N).
\end{split}
\]
Recalling definitions $a=N_A/N$ and $r=N_+/N$, $S(a)=-a\ln a -(1-a)\ln (1-a)$ is the binary Shannon entropy.

See Appendix~\ref{app:morty} for the derivation and Appendix~\ref{app:morty1_nonsymmetric} for the non-symmetric case.

\subsection{Morty 2}\label{sec:mixingentropymorty2}

Morty 2 can observe two macroscopic degrees of freedom: the total number of particles on the left, $N_A$ and the total number of particles on the right, $N-N_A$. However, he is aware that two types of particles exist and could be in the system.

His macrostates are parametrized by a couple $(N_A,N)$. They are given by subspaces with corresponding volumes defined as their dimension,
\[\label{eq:HSmorty_half_ignorant}
\begin{split}
\HS^{\mathrm{M}_2}&=\bigoplus_{N_+=0}^{N}\HS^{\mathrm{M}_1},\\
V^{\mathrm{M}_2}&=\sum_{N_+=0}^{N}V^{\mathrm{M}_1}
=2\sum_{N_+=0}^{N/2}\sum_{i=0}^{\min\{N_+,N_A\}}V^{\mathrm{R}}.
\end{split}
\]
The right hand side of the volumes was split into two parts, due to the assumption of $N_+\leq N_-$ in Eq.~\eqref{eq:HSmorty}.

Using the Laplace method on two variables, $i$ and $N_+$, we obtain
\[
\begin{split}
\ln V^{\mathrm{M}_2}&=N\left(\ln w-\ln N +S(a)+\ln 2+1\right)-\ln(4\pi N).
\end{split}
\]
(See Appendix~\ref{app:morty2}.) One can observe that up to the logarithmic correction, the leading order is the same as for Morty 1, with $r=1/2$.

\subsection{Morty 3 - the ignorant}\label{sec:mixingentropymorty3}

Finally, we consider the case of ``ignorant'' Morty, who not only cannot distinguish the two colors of particles, but is also not aware that these two colors of particles even exist. As a result, he also does not know how many particles of each type are in the system that he is observing. The only information he can access is the number of particles on the left and right sides, respectively.

This means that when calculating the associated observational entropy, he will assign different volumes to the macrostates than those corresponding to the actual number of microstates they contain, as given by Eq.~\eqref{eq:HSmorty_half_ignorant}.

From his perspective, the macrostate corresponding to having 
$N_A$ particles on the left and 
$N-N_A$ particles on the right will have a volume given by
\[
V^{\mathrm{M}_3}=\underset{\blackleft}{\mqty(w\\N_A)}\underset{\blackright}{\mqty(w\\N-N_A)}.
\]
We obtain
\[\label{eq:morty3entropy}
\begin{split}
\ln V^{\mathrm{M}_3} &= N (\ln w-\ln N +S(a)+1)\\
&-\frac{1}{2}\ln ((2\pi N)^2 a(1-a)).
\end{split}
\]
(See Appendix~\ref{app:morty3}.) The macrostate volume associated to the same observation is smaller than that of Morty 1 and Morty 2, by a factor of $N S(r)$ and $N \ln 2$, respectively. This is because Morty 3 has an incorrect judgement on the underlying physical Hilbert space. It is not because his uncertainty \emph{is}, in reality, smaller, but because he thinks it is smaller. The factors are precisely those related to the existence of two types of particles, which cannot be distinguished locally. In the next section, we show, however, that under some reasonable assumptions, he will extract the same amount of energy as Morty 2 does when he executes the extraction protocol with his limited information. This amount is further consistent with his computed value of entropy. For ease of comparison, see Table \ref{tab:observers} listing the abilities of each of the observers. 
\begin{table*}[ht!]
    \centering
    \begin{tabular}{l|c|c|l}
        Observer & Can access & Cannot access & Hilbert space\\
        \hline
        & & & \\
        Rick & $i$, $N_+$, $N_A$, $N_B$ & --- & $\HS^\mathrm{R} = \HS_+^A\otimes\HS_-^A\otimes\HS_+^B\otimes\HS_-^B$ \\
        & & & \\
        Morty 1 & $N_+$, $N_A$, $N_B$   & $i$ &  $\HS^{\mathrm{M_1}} = \bigoplus_{i=0}^{\min\{N_+,N_A\}}\HS^{\mathrm{R}}$\\
        & & & \\
        Morty 2 & $N_A$, $N_B$, $^\ast$   & $i$, $N_+$ &  $\HS^{\mathrm{M}_2}=\bigoplus_{N_+=0}^{N}\HS^{\mathrm{M}_1}$\\
        & & & \\
        Morty 3 &  $N_A$, $N_B$ & \begin{tabular}{c}
             $i$, $N_+$   \\
             (not aware) 
        \end{tabular} & $\HS^{\mathrm{M}_3}=
        \begin{cases}
        \HS^{A}\otimes \HS^{B}, &\text{(perceived)}\\
        \HS^{\mathrm{M}_2},&\text{(actual)}
        \end{cases}$ \\
    \end{tabular}
    \caption{Different type of observer for the purpose of this work. We have tabulated the various macroscopic observables accessible (or inaccessible) to each of the observers, and their effective Hilbert spaces. Use of $^\ast$ besides the observables of Morty 2 is to remind the reader that Morty 2 has knowledge of the existence of two colors, but he is unable to measure them (see the text in~\ref{sec:mixingentropyrick}-\ref{sec:mixingentropymorty3}). The labels denote: $i$ the number of blue particles on the left; $N_+$ the total number of blue particles; $N_A$ and $N_B$, the total number of particles on the left and the right, respectively.}
    \label{tab:observers}
\end{table*}

\subsection{Entropy difference between observers}\label{sec:entropydifference}

Consider a state which is contained within one of the macrostates of Rick, i.e., it has a definite macro-observable values $(i,N_+,N_A,N)$. For such a state, only one of the probabilities coming into the definition of observational entropy is equal to one while all others are zero. The same holds for Morty. Consequently, ignoring logarithmic terms, for Morty 1 we have
\[
\begin{split}
\Delta S_1&=S_{\CM_1}-S_{\CR}=\ln V^{\mathrm{M}_1}-\ln V^{\mathrm{R}}\\
&=N(S(a)+S(r)-S(q)).
\end{split}
\]
For Morty 2, we have
\[
\begin{split}
\Delta S_2&=S_{\CM_2}-S_{\CR}=\ln V^{\mathrm{M}_2}-\ln V^{\mathrm{R}}\\
&=N(S(a)+\ln 2-S(q)).
\end{split}
\]

When there is the same number of blue and red particles, $r=1/2$, while all blue are on the left and all red are on the right side, $a=p_i=1/2$, we obtain the long-expected result
\[\label{eq:symmetric_ent_diff}
\Delta S_1 = \Delta S_2 = N \ln 2,
\]
which holds for both Morty 1 and 2. Later, we will insert these entropy differences into the formula for ergotropy difference, Eq.~\eqref{eq:energy_diff2}.

It would not be fair to compare Morty 3’s entropy with Rick’s, who clearly has a deeper understanding of the system. Rick may assign a higher entropy to the same state, but only because he accounts for internal degrees of freedom that Morty 3, due to his incorrect judgment, does not recognize. As noted above and elaborated below, this discrepancy does not prevent Morty 3 from correctly predicting the amount of extractable work---though it will generally be lower than what Rick could extract, given his fuller knowledge.

\subsection{Difference in entropy over time}\label{sec:timeevolution}

In the following, we consider the initial state to be in one of the macrostates of Rick having a definite macro-observable values $(i,N_+,N_A,N)$. Further, we assume that the Hamiltonian preserves the total number of particles of each color and that the evolution will cause the particles to uniformly spread.

\textbf{Rick. } Assuming that the evolution will cause the particles to uniformly spread, in the long-time limit, the particles will uniformly fill the entire accessible Hilbert space, with probabilities proportional to the size of each macrostate. Mathematically, this means $p_l^{\mathrm{R}}=V_{l}^{\mathrm{R}}/V$, where $l=(i,N_+,N_A,N)$ denotes the label of each macrostate, $V_{l}^{\mathrm{R}}$ is defined in Eq.~\eqref{eq:volume_rick}, and the dimension of the entire accessible Hilbert space is,
\[
V=\mqty(2w\\N_+)\mqty(2w\\N-N_+).
\]
This means that the final observational entropy equals
\[\label{eq:final_entropy_rick}
\begin{split}
S_{\C_R}^{\mathrm{fin}}&=-\sum_l(V_{l}^{\mathrm{R}}/V)\ln\frac{V_{l}^{\mathrm{R}}/V}{V_{l}^{\mathrm{R}}}=\ln V\\
&= N(\ln 2w-\ln N+S(r)+1)-\frac{1}{2}\ln (2\pi N)^2r(1\!-\!r).
\end{split}
\]

Ignoring the logarithmic terms, this gives us the difference between the final and the initial entropy,
\[
\Delta S^{\mathrm{R}}=S_{\C_R}^{\mathrm{fin}}-S_{\C_R}=\ln \frac{V}{V^{\mathrm{R}}}= N(\ln 2+ S(r)-S(q)).
\]

For the standard scenario of Gibbs mixing paradox depicted in Fig.~\ref{fig:rick_oneD_main}, in which all blue particles start on the left, while the same number of the red particles starts on the right, corresponding to the initial configuration $p_i=1$ and $r=a=1/2$, we have $S(q)=\ln 2$ and $S(r)=\ln 2$. This means that Rick will observe growth in entropy
\[\label{eq:ricks_w_diff}
\Delta S^{\mathrm{R}}= N \ln 2,
\]
corresponding to the standard entropy of mixing, as the system thermalizes.

\textbf{Morty. } Assuming that the evolution will cause the particles to uniformly spread, for all clones of Morty 1 and 2, in the long-time limit the final entropy will be the same as Rick's, Eq.~\eqref{eq:final_entropy_rick}, since they share the same Hilbert space. For Morty 3, it will be given by the reduced Hilbert space with dimension $V_{\mathrm{red.}}=\binom{2w}{N}$, 
\[
S_{\CM_3}^{\mathrm{fin}}=\ln V_{\mathrm{red.}}=N(\ln 2w - \ln N+1)-\frac{1}{2}\ln 2\pi N.
\]

Ignoring logarithmic terms, this gives entropy growth,
\[\label{eq:correct_Morty_entropy_increase}
\begin{split}
\Delta S^{\mathrm{M}_1}&=S_{\CM_1}^{\mathrm{fin}}-S_{\CM_1}=\ln \frac{V}{V^{\mathrm{M}_1}}= N(\ln 2-S(a)),\\
\Delta S^{\mathrm{M}_2}&=\ln \frac{V}{V^{\mathrm{M}_2}} = N(\ln 2-S(a)),\\
\Delta S^{\mathrm{M}_3}&=\ln \frac{V_{\mathrm{red.}}}{V^{\mathrm{M}_3}}= N(\ln 2-S(a)).
\end{split}
\]

In the standard Gibbs mixing scenario, the initial state starts with configuration $a=1/2$. This means that none of the clones will not observe a change in entropy,
\[
\Delta S^{\mathrm{M}_1}=\Delta S^{\mathrm{M}_2}=\Delta S^{\mathrm{M}_3}=0.
\]

Thus, irrespective of the correct or incorrect microstate counting, the difference in entropy of the final and the initial state is the same for all. For most applications in thermodynamics it does not matter where the initial--baseline entropy is; all that matters is the entropy difference. Thus, there is fundamentally no Gibbs paradox, considering that Morty 1 and 2, who have a proper understanding of the underlying Hilbert space, and Morty 3, who does not, will observe the same increase in entropy.

Finally, we take a look at the difference in entropy assigned by different versions of Morty,
\[\label{eq:morty_time_independent}
\begin{split}
S_{\CM_1}-S_{\CM_3}&=N S(r),\\
S_{\CM_2}-S_{\CM_3}&=N \ln 2.
\end{split}
\]
For a system that contains only one type of particle, we have $r=0$, Morty 3's judgment becomes correct, and entropies associated by Morty 1 and 3 will coincide, recovering the correct limit. Since the system is conserving the number of particles of each type, the proportion of types $r$ is time-independent. In other words, for the entire time evolution, entropies associated to the system by the clones of Morty differ only by a time-independent value.

\begin{figure}[t!]
    \centering
    \includegraphics[width=1\linewidth]{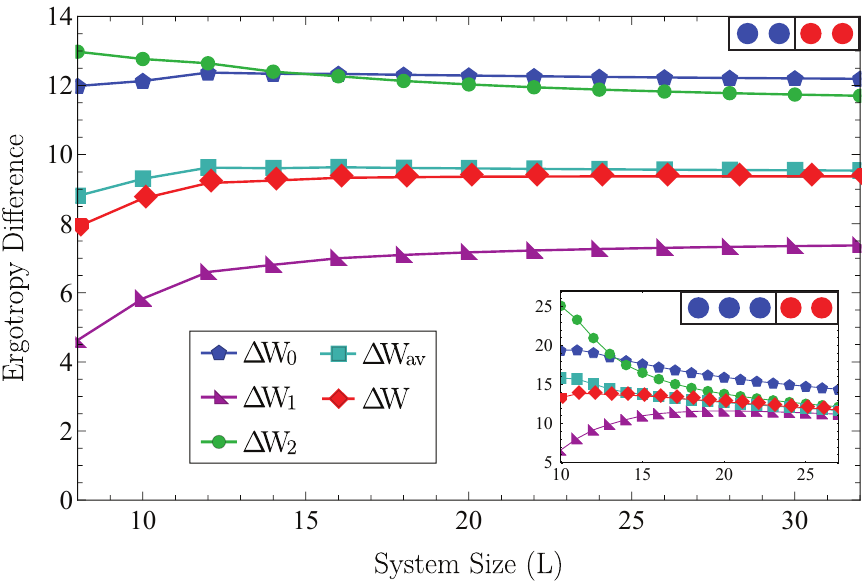}
    \caption{Ergotropy difference between Rick, who can distinguish the particles, and Morty 1, who cannot, for variable system sizes. We fix $k=1$. We compare the zeroth ($\Delta W_0=T\Delta S$), the first ($\Delta W_1=T\Delta S(1+\Delta S/2\CE)$), and the second ($\Delta W_2$, Eq.~\eqref{eq:energy_diff2}) order expansion of ergotropy difference, with the exact value $\Delta W$, Eq.~\eqref{eq:work_difference}. $\Delta W_{\mathrm{av}}=(\Delta W_1+\Delta W_2)/2$ is the average value, which approximates the exact value the best. This is reasonable considering that the alternating signs of the expansion imply that the real value must fall somewhere in between $\Delta W_1$ and $\Delta W_2$. As another heuristic approach, we employ a ``smart'' average by treating the expansion as a geometric series in the higher-order terms. This leads to $\Delta W_{\mathrm{av},2}\equiv \Delta W_2-w_2\frac{q}{1+q}$. The coefficients are defined as $w_1=\Delta W_1-\Delta W_0$, $w_2=\Delta W_2-\Delta W_1$, $q=w2/w1$. We show the case of symmetric box sizes, where $N_A=N_B=2$ and $L_A=L_B=L/2$, and the non-symmetric case (inset), where $N_A=3$, $N_B=2$, $L_A=6,\dots, 17$, $L_B=\lfloor\frac{N_B}{N_A}L_A\rfloor$, and $L=L_A+L_B$. All blue particles are on the left side, while all red ones are on the right. The average value closely matches the exact value. Additionally, the non-symmetric case shows clear convergence, confirming the expansion of the ergotropy difference.}
    \label{fig:diff_static}
\end{figure}

\begin{figure}[t!]
    \centering
    \includegraphics[width=1\linewidth]{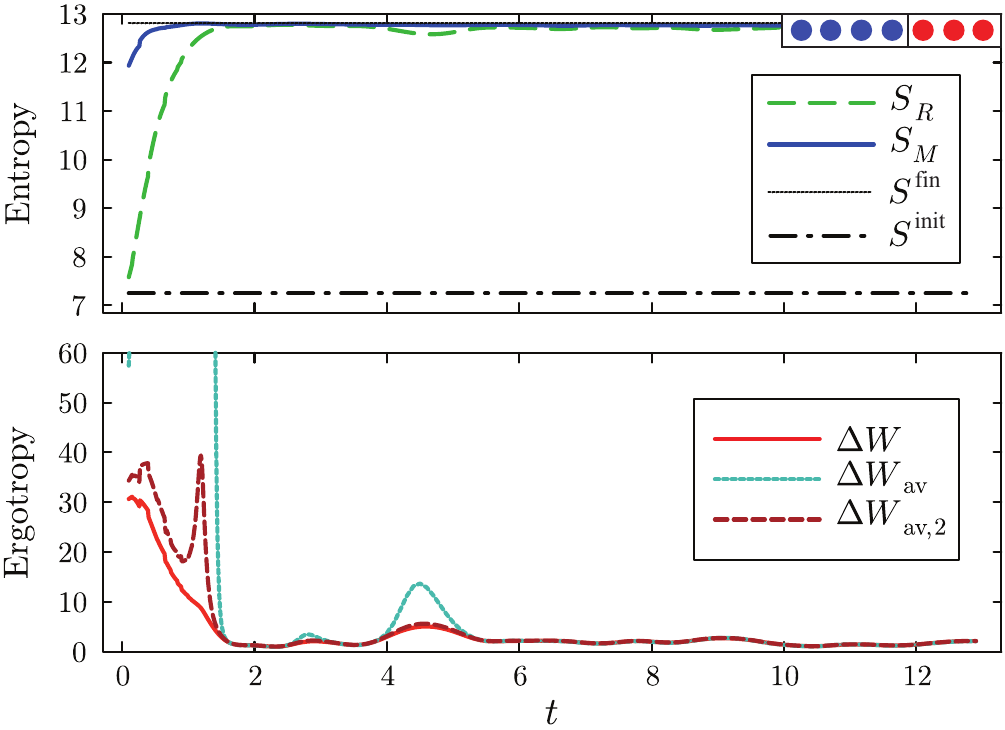}
    \caption{Observational ergotropy and entropy evolution through the mixing of a quantum gas. Evolved with Hamiltonian, Eq.~\eqref{eq:hamiltonian}, and initial configuration $L_A=8$, $L_B=6$, $N_{A,B}=L_{A,B}/2$, with all blue (red) particles starting on the left (right). We plot the exact ergotropy difference and two approximate values: the average defined in Fig.~\eqref{fig:diff_static} and the ``smart'' average, defined in caption \ref{fig:diff_static}. The smart average approximates the the exact value well for small values of $\Delta S$. Notice a small peak (dip) in ergotropy (entropy) around value $t=4.2$, illustrating that the ergotropy is inversely related to the entropy difference (this follows from Eq.~\eqref{eq:extracted_work}). The initial and final entropy values are given by Rick's initial configuration, $S^{\mathrm{init}}=\ln \binom{8}{4}\binom{6}{3}$, and by the dimension of the accessible Hilbert space, $S^{\mathrm{fin}}=\ln \binom{14}{4}\binom{14}{3}$, respectively.}
    \label{fig:diff_evolution}
\end{figure}

\section{Work difference in quantum mixing}\label{sec:workdiffmixing}

Inserting the entropy difference, Eq.~\eqref{eq:symmetric_ent_diff}, into Eq.~\eqref{eq:energy_diff2}, we obtain the difference between the amount of work Rick and Morty can extract in both quantum and classical Gibbs mixing scenario,
\[\label{eq:energy_diff_gibbs}
\Delta W\!\!=\! NkT \ln \!2\!\left(\!1\!-\!\frac{\ln \!2}{2\tilde  c_E}\!+\!\frac{1}{6}\!\left(1\!-\!\frac{\partial \ln \tilde c_E}{\partial T} T\right)\!\left(\frac{\ln \!2}{\tilde c_E}\right)^{\!2}\!\!\!+\ccdots\right)\!.
\]
where $\tilde c_E= \CE/(N k)$ is the dimensionless specific heat. We reiterate that temperature $T$ is not a temperature of a thermal bath. Instead, it is the observational temperature given by $S(\R_T)=S_{\C_M}$. For an ideal gas described in Table~\ref{tab:expansions}, we obtain $\Delta W=NkT\ln 2(1-\frac{1}{3}\ln 2+\frac{2}{27}(\ln 2)^2+\cdots)\approx 0.805\times NkT \ln 2$. 

We observe that the formula for Rick's entropy difference at different times, Eq.~\eqref{eq:ricks_w_diff}, is the same as the aforementioned entropy difference.
Following the arguments of Section~\ref{sec:difference_in_times}, this shows that Rick will extract $\Delta W$ less energy if he chooses to perform the extraction at a later time, as opposed to immediately. 

While different types of Morty associate different entropy, this difference remains constant throughout the evolution, Eq.~\eqref{eq:morty_time_independent}. Consequently, the increase in entropy will be the same for all of them, Eq.~\eqref{eq:correct_Morty_entropy_increase}. The extracted work difference depends only on this entropy increase. There is no advantage for either Morty 1 or 2; they will extract the same amount of energy.

Morty 3 requires much more care. This is because he lacks a correct understanding of the underlying Hilbert space, resulting from his unawareness of the existence of the two particles, due to his inability to distinguish between different colors (which is his defining feature). This we explained in Section~\ref{sec:mixingentropymorty3}. As a result, he further misjudges the actual unitary he applies to the system when attempting to extract work, misjudges the measurement he performs, and would infer a density matrix from quantum tomography that deviates from the true one. However, in Appendix~\ref{sec:morty3work_extraction} we prove that under some reasonable assumptions, he will extract the same amount of energy as Morty 2. This includes three consistency assumptions: one on the density matrix, stating that the actual density is macroscopically consistent with the one he would infer by performing any measurement that is unable to distinguish the colors of the particles, which we call an admissible measurement. Second, on the measurements, saying that the probability of distribution outcomes he predicts is the same as the actual one when he performs any admissible measurement. Third, on the unitary operations, saying that the actual unitary measurement is consistent with what is expected and detectable by admissible measurements. Furthermore, we have the super-selection rule assumptions, which state that it is not possible to have superpositions between different types of particles. This means that states such as $\ket{0+}+\ket{-0}$ are not allowed, nor can they be created by applying either measurement or unitary operations, or by letting the system evolve naturally under the Hamiltonian. So far, these assumptions have been relatively weak. Finally, we have one non-trivial assumption that will not be true for many systems, namely that the Hamiltonian does not depend on the color of the particles but only on the configuration---describing whether the particle is or is not at a given lattice site, irrespective of its color. This means that every two states with the same configuration, such as $\ket{0+}$ and $\ket{0-}$ corresponding to $\ket{01}$, must have the same energy. This ensures that Morty 3 correctly understands the system's energy, given the constraints of what he is able to measure.

With these assumptions, Morty 3 extracts exactly the same amount of energy as Morty 2. Thus, the difference in extracted work between the two times will be the same.

To conclude, there is no advantage for either clone of Morty. In other words, it does not matter whether Morty knows or does not know whether there are two types of particles in the system rather than just one, or how many blue particles there are in total. In either case, he will extract the same amount of energy. This resolves the Gibbs mixing paradox: while absolute values of associated entropy might be different, there is no difference in terms of what different clones of Morty can practically do.

We provide two simulations of the ergotropy difference: Static, applied only to the initial state which is localized in one of Rick's macrostates but for varying system sizes, see Fig.~\ref{fig:diff_static}; and dynamic, where the system starts in an initial macrostate of Rick but is then evolved with Hamiltonian that preserves the total number of particles of each kind and distributes them uniformly in the long-time limit, see Fig.~\ref{fig:diff_evolution}.

The Hamiltonian is given by
\[\label{eq:hamiltonian}
\begin{split}
    H &= \sum_{i} t_1 c^{\dagger}_{i}c_{i + 1} + v_1 n_{i}n_{i + 1}\\ &+\sum_{i} t_2 c^{\dagger}_{i}c_{i + 2} + v_2 n_{i}n_{i + 2} + \text{h.c.} 
\end{split}
\]
with periodic boundary conditions. $c_i^\dag$ and $c_i$ represent creation and annihiliation operators at site $i$, respectively, and $n_i$ represents the number operator. The numerical computations are done in the chaotic phase, with $t_1 = v_1 = 1$ and $t_2 = v_2 = 0.96$.

To minimize the error, we compute the entropy exactly from the binomial coefficients, Eqs.~\eqref{eq:volume_rick} and~\eqref{eq:general_volume_morty2_main}. Further, we compute the ergotropy difference by inserting this entropy difference into the exact formula, Eq.~\eqref{eq:work_difference}, and then from varying levels of approximation using the perturbative expansion, Eq.~\eqref{eq:energy_diff2}.

\section{Conclusion}\label{sec:conclusion}
In this work, we investigated how two observers—each characterized by a distinct ability to distinguish macrostates—assign entropy to an isolated quantum or classical system, and how this observer-dependent entropy influences the energy they can extract. Focusing on the difference in extractable energy between the two observers, we found that its perturbative expansion closely resembles Landauer’s bound, with higher-order corrections determined by the ratio of entropy to heat capacity. Notably, rather than the usual temperature of a thermal bath, an “observational temperature” emerges, governed by the observers’ knowledge of the system.

In the case of critical systems, we identified a critical amount of knowledge---reached when the observational temperature equals the critical temperature---beyond which the amount of extractable work jumps discontinuously. We further demonstrated that the same formula for the extracted energy difference applies when considering a single observer extracting energy at different times.

We then applied these ideas to the Gibbs mixing paradox, in which one observer can distinguish two types of particles while another cannot. A closer examination revealed three distinct subtypes of the latter observer, each associating a different amount of entropy to the system. 

However, all of them observe the same entropy increase. For the two observers who are aware of the existence of two types of particles, this means that they extract the same amount of work. For the third, who is not aware of their existence, we assume consistency of their description, superselection rules, and independence of the Hamiltonian on different particle types. With these assumptions, we show that they extract exactly the same amount of energy as the second observer, who is aware of the existence of two particle types, but unable to distinghuish them. This resolves the Gibbs mixing paradox as follows. 
Even though an observer who discovers the presence of ``two colors'' of particles experiences a discontinuous jump in the entropy they assign to the system (counterintuitively, knowing more but associating higher entropy!), this new knowledge does not provide a practical advantage for work extraction if the particles remain indistinguishable in their measurements.

We further argued that the loss of extractable energy caused by mixing must also place a lower bound on the energy required to separate the particles, though there are subtleties in reversing the protocol that merit future investigation. Our analysis was restricted to projective measurements; generalizing to POVMs~\cite{buscemi2022observational,teixido2025entropic}, including non-commuting measurements~\cite{majidy2023noncommuting,lasek2024numerical} would be an interesting avenue for follow-up work. While free energy is the relevant quantity in standard mixing scenarios, we showed that “observational ergotropy” plays an analogous role in the isolated systems considered here. It would be worthwhile to explore whether these concepts overlap, potentially through studying local equilibrium states where a combination of the local particle number (as considered here) and the local energy coarse-grainings define a ``dynamical thermodynamic entropy~\cite{safranek_2021_brief}.'' Further, as dynamical thermodynamic entropy corresponds with thermodynamic entropy for systems in equilibrium, it would be interesting to compare the work difference derived from this work and that from the adiabatic expansion of ideal gas. This would serve as a check for the consistency of the derived theory. Another promising direction is to examine Fisher information associated with temperature measurements, given its relation to the energy variance that appears in the perturbative expansion of the extracted work.

Finally, we propose two experimental platforms which allow the manipulation of two conserved, distinguishable particle states to test these ideas in a quantum setting. First, we can use Bose-Einstein condensates (BECs), e.g. of $^{87}\text{Rb}$, where hyperfine states (as ``colors'') are spatially separated by applying a magnetic field to induce Zeeman splitting~\cite{myatt_1997_production, hall_1998_dynamics, wheeler_2004_spontaneous}. The evolution conserves the density of each type, and mixing can be created by modulating magnetic field~\cite{lin_2011_spin}. The second promising platform is a 2D optical lattice with Lithium atoms in hyperfine states to denote $\ket{\uparrow}$ and $\ket{\downarrow}$. Each lattice site holds either no atom, one atom of either spin, or two atoms of opposite spin. Using a spin-removal beam and site-resolved imaging~\cite{greif_2016_siteresolved,parsons_2016_siteresolved}, we can determine the spin number, type, and mixing at each site, enabling the computation of observational entropy and extractable work. To realize the work-extraction protocol in practice would require further modifications---such as developing controls for the random and extraction unitaries---or adapting the protocol to fit experimental constraints.

\medskip\noindent \textbf{Acknowledgments.} B.B., R.K.R.,  and D.\v{S}.~acknowledge the support from the Institute for Basic Science in Korea (IBS-R024-D1). Many thanks to Felix C. Binder for his insightful comments and questions, which have shaped the direction of this research and improved its overall presentation.

\appendix
\section{Derivation of the work difference between observers with similar capabilities}

First, we derive expansion of $\Delta \beta$ in terms of $\Delta S$. We do this by expanding Eq.~\eqref{eq:deltaS} as, 
\[\label{eq:fordeltabeta}
\frac{1}{6}S'''\Delta\beta^3+\frac{1}{2}S''\Delta\beta^2+S'\Delta\beta+\Delta S=0.
\]
There, the coefficients represent the derivative of the von Neumann entropies, $S^{(k)}\equiv\frac{d^kS_{\vN}(\R_{\beta})}{d\beta^k}$. Explicitly, we derive
\begin{align}
S'&=-\beta v\\
S''&=-v-\beta x\\
S'''&=-2x-\beta y,
\end{align}
where 
\begin{align}
v&=-\frac{d\mean{E}}{d\beta}=\mean{E^2}-\mean{E}^2,\\
x&=\frac{dv}{d\beta}=-\mean{E^3}+3\mean{E}\mean{E^2}-2\mean{E}^3,\\
y&=\frac{dx}{d\beta}=\mean{E^4}-4\mean{E}\mean{E^3}-3\mean{E^2}^2\\
&\ \ \ \ \ \ \ \ \;\!+12\mean{E}^2\mean{E^2}-6\mean{E}^4,
\end{align}
are the variance of the thermal state, the first derivative of the variance, and the second derivative, respectively. To derive this, we have used $S_{\vN}(\R_{\beta})=\ln Z +\beta \mean{E}$, $\frac{dZ}{d\beta}=-\mean{E} Z$, and $\frac{d\mean{E^k}}{d\beta}=-\mean{E^{k+1}}+\mean{E}\mean{E^{k}}$.

We rewrite Eq.~\eqref{eq:fordeltabeta} as
\[\label{eq:fordeltabeta2}
\frac{\beta(2X+\beta Y)}{6}b^3+\frac{1+\beta X}{2}b^2+b-\delta=0.
\]
We have defined $b=\Delta\beta/\beta$, $X=x/v$, $Y=y/v$, and $\delta=\Delta S/(v \beta^2)$. We look for the solution of in the small expansion of $\delta$, and derive
\[
\begin{split}
\Delta\beta\!=\!\beta{\delta}\left(1\!-\!\frac{1\!+\!\beta X}{2} {\delta}\!+\!\frac{3\!+\!4\beta X\!+\!3\beta^2 X^2\!-\!\beta^2 Y}{6} {\delta}^2\!+\ccdots\right).\\
\end{split}
\]

Similarly, we expand the work difference in the expansion of $\Delta \beta$ as,
\[
\Delta W=v\Delta\beta\left(1+\frac{1}{2}X\Delta\beta+\frac{1}{6}Y\Delta\beta^2+\cdots\right).
\]
There, we have used $\Delta W=\mean{E}_\beta-\mean{E}_{\bR}$ (note that $\beta\equiv\bM$), where $\mean{E}\equiv\mean{E}_\beta=\tr[\ham \R_\beta]$, and $\mean{E}_{\bR}=\mean{E}-v\Delta\beta-\frac{1}{2}x\Delta\beta^2-\frac{1}{6}y\Delta\beta^3+\ccdots$. Combining the above two equations, we obtain our final result in the expansion of $\delta$,
\[
\Delta W=v\beta \delta\left(1-\frac{\delta}{2}+\frac{3+X\beta}{6}\delta^2+\cdots\right).
\]
Inserting the substitutions back together with $\beta=1/(kT)$, we obtain
\[\label{eq:energy_diffapp}
\Delta W\!=\!kT\Delta S\!\left(1\!-\!\frac{(kT)^2}{2v}\Delta S\!+\!\frac{(3kT\!+\!X)(kT)^3}{6v^2}\Delta S^2\!+\ccdots\right)\!.
\]
which is Eq.~\eqref{eq:energy_diff1}. Note that the final result does not depend on the fourth moment of energy, $Y$, as its contributions cancel out.

The variance of the thermal state is connected to the heat capacity,
\[
v=-\frac{d\mean{E}}{d \beta}=-\frac{\partial \mean{E}}{\partial T}\frac{\partial T}{\partial \beta}=-\CE \frac{\partial \tfrac{1}{k\beta}}{\partial\beta}=\frac{\CE}{k \beta^2}=\CE k T^2,
\]
where $T$ is the temperature, and $\CE=\partial E/\partial T$ is the heat capacity.
Similarly, we derive an expression for $x$ as
\[
x=\frac{dv}{d\beta}=-\frac{\partial(\CE k T^2)}{\partial T} kT^2=-\frac{\partial \CE}{\partial T} (kT^2)^2-2\CE k^2T^3.
\]
This gives
\[
X=\frac{x}{v}=-\frac{1}{\CE}\frac{\partial \CE}{\partial T}kT^2-2 kT=-\frac{\partial \ln (\CE/k)}{\partial T} kT^2-2kT.
\]
Finally, we obtain Eq.~\eqref{eq:energy_diff2},
\[
\Delta W=\!kT\!\Delta S\left(1\!-\!\frac{k\Delta S}{2\CE}\!+\!\frac{1\!-\!\frac{\partial \ln (\CE/k)}{\partial T} T}{6}\left(\frac{k\Delta S}{\CE}\right)^2\!\!\!\!+\ccdots\right)\!.
\]

\section{Volumes of Rick and Morty macrostates}

\subsection{Rick}\label{app:rick}

Rick can observe both the number and type of particle in each part. His macrostate is given by four numbers: $i$, denoting the number of blue particles on the left; $ N_+$, the total number of blue particles; $ N_A$, the total number of particles on the left; and $N$, the total number of particles. From these, the total number of red particles, $N_-$, and the total number of particles on the right, $N_B$, can be derived using $N_+ +N_-=N_A+N_B=N$.

Volume this macrostate is given by
\[\label{eq:rick_volume_app}
V^{\mathrm{R}}=\underset{\blueleft}{\mqty(w\\i)}\underset{\redleft}{\mqty(w\\N_A-i)}\underset{\blueright}{\mqty(w\\N_+-i)}\underset{\redright}{\mqty(w\\N-N_+-(N_A-i))}.
\]
The Binomial coefficients correspond to counting the number of microstates for blue (+) particles on the left side, red (-) on the left, blue on the right, and red on the right, from left to right. 

We compute this volume by using Stirling's approximation,
\[
\begin{split}
    \ln\mqty(w\\k)&=w\ln w - k\ln k-(w-k)\ln(w-k)\\
    &+\frac{1}{2}\ln\frac{w}{2\pi k (w-k)},
\end{split}
\]
Assuming that $w$ is significantly larger than other parameters, we approximate terms such as
\[
\ln (w-k)\approx \ln w-\frac{k}{w}.
\]
This gives
\[\label{eq:comb1}
\begin{split}
\ln\mqty(w\\k)&=w\ln w - k\ln k-(w-k)\left(\ln w-\frac{k}{w}\right)\\
&-\frac{1}{2}\ln 2\pi k+\frac{1}{2}\frac{k}{w},\\
&=k\ln w - k\ln k+k-\frac{k^2}{w}-\frac{1}{2}\ln 2\pi k+\frac{1}{2}\frac{k}{w},\\
&\approx k\ln w - k\ln k+k-\frac{1}{2}\ln 2\pi k,
\end{split}
\]
where in the last term we neglected terms divided by $w$.

For any $0< p\leq 1$, we can conveniently rewrite Eq.~\eqref{eq:comb1} as 
\[\label{eq:comb2}
\ln \mqty(w\\p\;\! N)=pN(\ln w-\ln N-\ln p+1)-\frac{1}{2}\ln 2\pi p N.
\]

We define $p_i=i/N$ as the proportional amount of blue particles on the left out of all particles. We also define the proportion of the total particles on the left, $a=N_A/N$, and the proportion of the blue particles $r=N_+/N$.

We obtain
\[\label{eq:rick_final}
\ln V^{\mathrm{R}}
=N(\ln w-\ln N+S(q)+1)-\frac{1}{2}\ln \Big((2\pi N)^4 \prod_{j=1}^4 q_j \Big)
\]
where $q_1=p_i$, $q_2=a-p_i$, $q_3=r-p_i$, and $q_4=1-r-a+p_i$.

Note that for $a$ or $r$ equal to zero, the constant term is not defined. However, neither is Eq.~\eqref{eq:comb1} for $k=0$ due to the limits of the validity of Stirling's approximation. However, for any tiny but non-zero $a$ and $r$, the equation is valid and very precise. For example, when $a=1/N$ corresponding to $N_A=1$, the problematic term becomes
$\ln a(1-a)N =\ln (1-1/N)\approx 1/N$. Consequently, terms will be small compared to the dominant $N\ln w$ term.

To investigate this situation in detail, consider the situation in which $p_i=0$. As a result, the volume consists of three terms,
\[\label{eq:boundary_case}
V^{\mathrm{R}}=\mqty(w\\N_A)\mqty(w\\N_+)\mqty(w\\N-N_+-N_A)
\]
We analogously derive
\[
\begin{split}
\ln V^{\mathrm{R}}_{\mathrm{init}}(i=0)
&=N(\ln w-\ln N+S(q)+1)\\
&-\frac{1}{2}\ln \Big((2\pi N)^3 \prod_{j=2}^4 q_j \Big),
\end{split}
\]
where $q_2=a$, $q_3=r$, and $q_4=1-r-a$. Thus, the general form is still valid and equal to Eq.~\eqref{eq:rick_final}, considering that one takes out $-\frac{1}{2}\ln(2\pi N q_j)$ for each $q_j=0$.

\subsection{Morty 1}\label{app:morty}

Morty can observe only two macroscopic degrees of freedom, that is, the total number of particles on the left $N_A$, and the total number of particles on the right, which we denote as the total number of particles minus those on the left, $N-N_A$.

We again have $N_+ +N_-=N_A+N_B=N$. Without loss of generatity we define $N_A$ and $N_+$ to be the smaller of the two, i.e., $N_A\leq N_B$ and $N_+\leq N_-$. This guarantees that $N\geq N_A+N_+$ for the expression to be well defined. 
The volume corresponding the this macrostate is the sum over all smaller macrostates (of Rick) that lead to the same macroscopic observation for Morty,
\[\label{eq:general_volume_morty1app}
\begin{split}
&V^{\mathrm{M}_1}=\\
&\sum_{i=0}^{\min\{N_+,N_A\}}\!\!\!\underset{\blueleft}{\mqty(w\\i)}\underset{\redleft}{\mqty(w\\N_A\!-\!i)}\underset{\blueright}{\mqty(w\\N_+\!-\!i)}\underset{\redright}{\mqty(w\\N\!-\!N_+\!-\!(N_A\!-\!i))}\\
&=:\sum_{i=0}^{\min\{N_+,N_A\}}f(i).
\end{split}
\]

We find the largest terms of the sum assuming that $w$ is much larger than the rest. Then, using Stirling's approximation, Eq.~\eqref{eq:comb1}, we express $\ln f(i)$. Setting the derivative of this function to zero, we find $i_{\max}=\frac{N_+N_A}{N}$. Clearly, $0\leq i_{\max}\leq N_+,N_A$, so this value gives a valid maximum. Since logarithm is a strictly growing function, $i_{\max}$ also gives the maximum of $f(i)$.

Recalling definitions $a=N_A/N$ and $r=N_+/N$, we can elegantly write $i_{\max}=a r N$. Using Eq.~\eqref{eq:comb2}, we obtain the maximal term,
\[\label{eq:max_morty1}
\begin{split}
\ln f_{\max}&=N\left(\ln w-\ln N +S(a)+S(r)+1\right)\\
&-\frac{1}{2}\ln( (2\pi N)^4 a(1-a)r(1-r)).
\end{split}
\]

Next, we include the effect of the sum, which will contribute at most logarithmically in $N$. We use the Laplace method to approximate the binomial coefficients around the maximum and then estimate the sum by computing the integral. Mathematically, we have
\[
f(i)=f_{\max}\exp\left(-\frac{(i-i_{\max})^2}{2\sigma^2}\right)
\]
We compute the second derivative to compute the variance,
\[
\sigma^2=-\left(\left.\frac{d^2 \ln f}{di^2}\right|_{i=i_{\max}}\right)^{-1}
=Na(1-a)r(1-r).
\]
Finally, we obtain the sum,
\[\label{eq:general_volume_morty3}
\begin{split}
V^{\mathrm{M}_1}
&=\int_{-\infty}^{+\infty}f(i)=f_{\max}\sqrt{2\pi N a(1-a)r(1-r)}.
\end{split}
\]
Collecting all results, we obtain the macrostate volume,
\[
\begin{split}
    \ln V^{\mathrm{M}_1}&=N\left(\ln w-\ln N +S(a)+S(r)+1\right)\\
    &-\frac{3}{2}\ln(2\pi N).
\end{split}
\]

\subsection{Morty 2}\label{app:morty2}

The macrostate volume of Morty 2, parametrized by a couple $(N_A,N)$, is given by 
\[\label{eq:general_volume_morty2app}
V^{\mathrm{M}_2}=2\sum_{N_+=0}^{N/2}\sum_{i=0}^{\min\{N_+,N_A\}}f(i,N_+).
\]
where $f(i,N_+)$ is a function identical to that of $f(i)$ in Eq.~\eqref{eq:general_volume_morty1app} and equal to the Rick's macrostate volume, Eq.~\eqref{eq:rick_volume_app}.

The maximum of $f(i,N_+)$ is reached at $i_{\max}=aN/2$ and $R_{+\max}=N/2$, and given by
\[\label{eq:max_morty2}
\begin{split}
\ln f_{\max}&=N\left(\ln w-\ln N +S(a)+\ln 2+1\right)\\
&-\frac{1}{2}\ln( \frac{(2\pi N)^4}{4} a(1-a)).
\end{split}
\]
(Obtained by inserting $r=1/2$ into Eq.~\eqref{eq:max_morty2}.)

We got a Gaussian approximation
\[
f(i, N_+) \approx f_{\max}\exp\left( \frac{1}{2} \begin{pmatrix} \Delta i & \Delta N_+ \end{pmatrix} H \begin{pmatrix} \Delta i \\ \Delta N_+ \end{pmatrix}\right),
\]
where $\Delta i=i-i_{\max}$, $\Delta N_+=N_+-R_{+\max}$, and the Hessian is defined as,
\[
H = 
\left. 
\begin{pmatrix}
\displaystyle \frac{\partial^2 \ln f}{\partial i^2} & \displaystyle \frac{\partial^2 \ln f}{\partial i \partial N_+} \\[12pt]
\displaystyle \frac{\partial^2 \ln f}{\partial N_+ \partial i} & \displaystyle \frac{\partial^2 \ln f}{\partial N_+^2}
\end{pmatrix}
\right|_{(i_{\max}, R_{+\max})}
\!\!\!\!\!\!\!\!\!\!\!\!\!\!\!\!\!\!=\frac{4}{N}\begin{pmatrix}
\frac{-1}{a(1-a)} & \frac{1}{1-a} \\
\frac{1}{1-a} & \frac{-1}{a(1-a)} \\
\end{pmatrix}.
\]
Considering that the function is symmetric around the maximum $R_{+\max}$, we can approximate
\[
\begin{split}
V^{\mathrm{M}_2}&=2\int_{0}^{N/2}\int_{0}^{\min\{N_+,N_A\}}f(i,N_+)\ di\, dN_+.\\
&\approx \int_{-\infty}^{+\infty}\int_{-\infty}^{+\infty}f(i,N_+)\ di\, dN_+\\
&=f_{\max}\frac{2\pi}{\sqrt{\mathrm{det}(-H)}}=f_{\max} \frac{\pi N}{2}\sqrt{a(1-a)}.
\end{split}
\]
We obtain
\[
\begin{split}
\ln V^{\mathrm{M}_2}&=N\left(\ln w-\ln N +S(a)+\ln 2+1\right)-\ln(4\pi N).
\end{split}
\]

\subsection{Morty 3}\label{app:morty3}

Finally, we consider the case of ``ignorant'' Morty, who not only cannot distiguish the two colors of particles, but also is not aware that these two colors of particles even exist. As a result, he also does not know how many particles of each type are in the system that he is observing. The only information he can obtain is how many particles are on the left and on the right side.

This means that when calculating the associated observational entropy, he will associate different volumes to these macrostates than their values given by the actual number of microstates that are contained within, given by Eq.~\eqref{eq:general_volume_morty2app}.

From his point of view, the macrostate associated with having $i$ particles on the left and $N-i$ particles on the right is given by
\[
V^{\mathrm{M}_3}=\underset{\blackleft}{\mqty(w\\N_A)}\underset{\blackright}{\mqty(w\\N-N_A)}.
\]
Using Eq.~\eqref{eq:comb2}, we obtain
\[
\begin{split}
\ln V^{\mathrm{M}_3} &= N (\ln w-\ln N +S(a)+1)\\
&-\frac{1}{2}\ln ((2\pi N)^2 a(1-a))
\end{split}
\]

\subsection{Rick - non-symmetric case}\label{app:rick_nonsymmetric}

Finally, let us consider a non-symmetric case of Rick, where the macrostate volume is given by
\[
\begin{split}
&V^{\mathrm{R}}=\underset{\blueleft}{\mqty(L_A\\i)}\underset{\redleft}{\mqty(L_A\\N_A\!-\!i)}\underset{\blueright}{\mqty(L_B\\N_+\!-\!i)}\underset{\redright}{\mqty(L_B\\N\!-\!N_+\!-\!(N_A\!-\!i))}\\
&={\mqty(L_A\\p_i N)}{\mqty(L_A\\(a\!-\!p_i)N)}{\mqty(L_B\\(r\!-\!p_i)N)}{\mqty(L_B\\(1\!-\!r\!-\!a\!+\!p_i)N)}\\
&={\mqty(L_A\\q_1 N)}{\mqty(L_A\\q_2N)}{\mqty(L_B\\q_3 N)}{\mqty(L_B\\q_4 N)}.
\end{split}
\]
Using Eq.~\eqref{eq:comb2}, we find
\[\label{eq:rick_nonsym}
\begin{split}
\ln V^{\mathrm{R}}&=
q_1N(\ln L_A-\ln N-\ln q_1+1)-\frac{1}{2}\ln 2\pi N q_1\\
&+q_2N(\ln L_A-\ln N-\ln q_2+1)-\frac{1}{2}\ln 2\pi N q_2\\
&+q_3N(\ln L_B-\ln N-\ln q_3+1)-\frac{1}{2}\ln 2\pi N q_3\\
&+q_4N(\ln L_B-\ln N-\ln q_4+1)-\frac{1}{2}\ln 2\pi N q_4\\
&=N(a\ln L_A+(1\!-\!a)\ln L_B-\ln N+S(q)+1)\\
&-\frac{1}{2}\ln \Big((2\pi N)^4 \prod_{j=1}^4 q_j\Big).
\end{split}
\]

If the gas is uniformly distributed with all the blue particles are on the left and all the red particles on the right, we have $L_A=a L$, $L_B=(1-a)L$, $q_1=a$, $q_2=q_3=0$ (which also means $r=a$), and $q_4=1-a$. This gives
\[
\begin{split}
\ln V_{\mathrm{init}}^{\mathrm{R}}=N(\ln L-\ln N+1)-\frac{1}{2}\ln \Big((2\pi N)^2 a(1-a)\Big).
\end{split}
\]

When the particles are allow to mix they fill the space uniformly, which corresponds to
\[\label{eq:rick_final_nonsymmetric}
\begin{split}
S_{\C_R}^{\mathrm{fin}}&=\ln V =\ln {\mqty(L\\a N)}{\mqty(L\\(1-a)N)}\\
&=N(\ln L-\ln N+S(a)+1)\\
&-\frac{1}{2}\ln \Big((2\pi N)^2 a(1-a)\Big).
\end{split}
\]
The growth in entropy is the standard entropy of mixing,
\[
\begin{split}
\Delta S^{\mathrm{R}}&=S_{\C_R}^{\mathrm{fin}}-S_{\C_R}^{\mathrm{init}}=\ln \frac{V}{V_{\mathrm{init}}^R}= N S(a).
\end{split}
\]

\subsection{Morty 1 - non-symmetric case}\label{app:morty1_nonsymmetric}
\[
\begin{split}
&V^{\mathrm{M}_1}=\\
&\sum_{i=0}^{\min\{N_+,N_A\}}\!\!\!\underset{\blueleft}{\mqty(L_A\\i)}\underset{\redleft}{\mqty(L_A\\N_A\!-\!i)}\underset{\blueright}{\mqty(L_B\\N_+\!-\!i)}\underset{\redright}{\mqty(L_B\\N\!-\!N_+\!-\!(N_A\!-\!i))}\\
&=:\sum_{i=0}^{\min\{N_+,N_A\}}f(i)\approx f_{\max} \sqrt{2\pi \sigma^2}
\end{split}
\]
where $f_{\max}$ is given again by $i=\frac{N_+ N_A}{N}=a r N$, and the variance is given again by
\[
\sigma^2=-\left(\left.\frac{d^2 \ln f}{di^2}\right|_{i=i_{\max}}\right)^{-1}
=Na(1-a)r(1-r).
\]
We obtain $f_{\max}$ by inserting $p_i=a r$ into Eq.~\eqref{eq:rick_nonsym} and derive
\[
\begin{split}
\ln V^{\mathrm{M}_1}&=N\big(a\ln L_A+(1\!-\!a)\ln L_B-\ln N\\
&+S(a)+S(r)+1\big)-\frac{3}{2}\ln (2\pi N).
\end{split}
\]

For the same initial and final state as considered in Rick's non-symmetric case, we have
\[
\begin{split}
\ln V_{\mathrm{init}}^{\mathrm{M}_1}&=N\big(\ln L-\ln N+S(a)+1\big)-\frac{3}{2}\ln (2\pi N),
\end{split}
\]
while the final entropy is the same as Rick's, Eq.~\eqref{eq:rick_final_nonsymmetric}. Thus, the entropy stays almost constant during the time evolution, with only a logarithmic increase coming from the increased uncertainty about the exact number of particles on each side,
\[
\begin{split}
\Delta S^{\mathrm{M}_1}&=\ln \frac{V}{V_{\mathrm{init}}^{\mathrm{M}_1}}= \frac{3}{2}\ln \Big(2\pi N\Big)-\frac{1}{2}\ln \Big((2\pi N)^2 a(1-a)\Big)\\
&=\frac{1}{2}\ln \frac{2\pi N}{a(1-a)}.
\end{split}
\]

Analysis for the non-symmetric case for Morty 2 and 3 proceeds analogously, with the same results as the symmetric case with a substitution $\ln w\rightarrow a\ln L_A+(1\!-\!a)\ln L_B$.

\section{Morty 3 work extraction}~\label{sec:morty3work_extraction}

In this section, we investigate what it means when Morty 3 attempts to extract work using his incorrect judgment of the situation.

For that, we will have to assume some structure about what he thinks he is doing versus what he is actually doing, as his understanding of the situation is limited.

Let us summarize what we know about Morty 3. He believes there is only one type of particle in the system---he sees only black particles and is unaware that the system may in fact consist of a mixture of red and blue particles. A similar situation would be that of a physicist working with spin-1/2 particles before the discovery of spin.

As a result, he not only underestimates the system’s entropy—judging it by Eq.~\eqref{eq:morty3entropy}—but also misidentifies the Hilbert space the particles occupy. He assigns a state corresponding to what he believes to be black particles, whereas in reality, the system is in a different state composed of red and blue particles. Consequently, when he performs unitary operations to extract work, he believes they act only on black particles, while in fact they affect a mixture of red and blue particles.

Given this, we construct a model of the actual system that aligns with Morty 3’s incomplete description of reality, based on specific assumptions. Additionally, we must identify the underlying transformation that is truly implemented when he believes he is performing an operation. Our approach must be consistent—the actual transformation should reproduce the outcome he intends to achieve.

\subsection{Consistency, assumptions, and the resulting structure}

Mathematically, we define consistency as follows: for all measurements accessible to him, the outcome probabilities before and after his unitary operation must coincide with those generated by the actual state under our model transformation (see Eqs.~\eqref{eq:preUnitary_Measure_appC} and ~\eqref{eq:postUnitary_Measure_appC} below).

Consider that $\R$ and $U$ represent the actual state and transformation, while $\R_M$ and $U_M$, respectively, denote his corresponding perception of the same. Also, consider $\C=\{\P_i\}$ the actual measurement, while $\C_M=\{\P_i^M\}$ is the measurement he thinks he performs.

We begin by assuming the consistency of measurements $(a1)$. This assumption consists of two parts: First, we assume that Morty 3 cannot perform a measurement that distinguishes the colors of the particles in any way. In other words, the set of admissible measurements is restricted to those that cannot distinguish between the blue and the red particles. Second, we assume a physically motivated correspondence between the perceived and actual measurements:
\begin{align}
&(a1) &\C_M \longrightarrow \C.
\end{align}
For example, consider a two-site chain containing a single red particle, which appears black to Morty 3. From his perspective, the measurement of the particle’s position---whether it is on the left or right site---is given by
\[
\C_M=\{\ketbra{10}{10}, \ketbra{01}{01}\}.
\]
In reality, the particle could be either red or blue, corresponding to an extended Hilbert space. Thus, the measurement Morty performs effectively projects onto the subspaces where some particle---regardless of its internal type---is located on the left or right site.
\[\label{eq:morty3_measurement_initial}
\begin{split}
\C=\{&\ketbra{+0}{+0}+\ketbra{-0}{-0}, \\
&\ketbra{0+}{0+}+\ketbra{0-}{0-}\},
\end{split}
\]
This measurement can be expressed using the formalism of Morty 2, who has the correct understanding of the Hilbert space. It illustrates that both internal states, $
\ket{+0}$ and $
\ket{-0}$, which are indistinguishable to Morty 3, will be registered as the same outcome by him. This can be seen as follows: denoting the projector onto the "left" position as $
\P_L=\ketbra{+0}{+0}+\ketbra{-0}{-0}$ (seen on the left), we find that $p_L=
\bra{+0}\P_L\ket{+0}=\bra{-0}\P_L\ket{-0}=1$. We will formalize this assumption in more detail later and extend it to arbitrary projective measurements.

Second, we assume consistency of density matrices (assumption $(a2)$.). For any admissible perceived measurement $\C_M$, with corresponding actual measurement $\C$, the actual density matrix $\R$ must reproduce the same outcome distribution as the perceived density matrix $\R_M$,
\begin{align}\label{eq:preUnitary_Measure_appC}
&(a2) &\R\overset{\C}{\longrightarrow}\ &p_i \overset{\C_M}{\longleftarrow} \R_M,
\end{align}
where $p_i=\tr[\R \P_i]=\tr[\R_M \P_i^M]$.

Third, we assume consistency of unitary operations (assumption $(a3)$). If Morty 3 believes he is applying a unitary $U_M$, then the actual unitary $U$ applied must produce a state that is indistinguishable---under any measurement that cannot resolve particle colors---from the state he believes he has prepared,
\begin{align}\label{eq:postUnitary_Measure_appC}
&(a3) &U\R U^\dag\overset{\C}{\longrightarrow}\ &q_i \overset{\C_M}{\longleftarrow} U_M\R_M U_M^\dag, 
\end{align}
where $q_i=\tr[U\R U^\dag \P_i]=\tr[U_M\R_M U_M^\dag \P_i^M]$.

The next set of assumptions concerns the admissible states, allowed evolutions, and applicable unitary operators, in order to uphold superselection rules between different particle types. We restrict ourselves to the situation where there is a fixed and known total number of particles $N$, as assumed in the main text, implying the actual Hilbert space (which is the Hilbert space of Morty 2, who cannot distinguish the colors, but knows that two colors exist),
\begin{align}\label{eq:HSmorty_half_ignorant2}
&(a4) &\HS\equiv\HS^{\mathrm{M}_2}&=\bigoplus_{N_+=0}^{N}\HS_+^{N_+}\otimes \HS_-^{N-N_+}.
\end{align}
(Recall definition \eqref{eq:HSmorty_half_ignorant}.)

Further, we assume that the actual density matrix does not admit superpositions of states that involve differing number of particles of each type, such as $(\ket{0-}+\ket{+0})/\sqrt{2}$~\footnote{We denote $\ket{0-}$ to express that in the first position there are no particles, and in the second position there is a red particle and so on.}, which we can write as
\begin{align}\label{eq:actual_density}
&(a5) &\R=\bigoplus_{N_+=0}^N \lambda_{N_+} \R_{N_+} .
\end{align}
where $\sum_{N_+=0}^N \lambda_{N_+} =1$, where $\R_{N_+}$ is a density matrix---linear operator---acting on $\HS_+^{N_+}\otimes \HS_-^{N-N_+}$.
In other words, it can only admit classical mixtures between these subspaces, thereby forming a super-selection rule.

We further assume that the unitary that he performs cannot change the color of the particles, meaning that it acts independently on both the red and the blue particle Hilbert spaces, which is mathematically described as factorizing
\begin{align}\label{eq:actual_unitary}
&(a6) &U=\bigoplus_{N_+=0}^N U^{N,N_+}.
\end{align}
where every unitary $U^{N,N_+}$ acts only at the subspace $\HS_+^{N_+}\otimes \HS_-^{N-N_+}$. This also means that $U$ is block-diagonal in the full Hilbert space $\HS$. This ensures that the super-selection cannot be broken by Morty 3 applying a unitary. Intuitively, this ensures that Morty 3, by virtue of his ignorance, does not wield more operational ability than Rick. We constructed Rick's Hilbert space to be constrained by the fixed number of red and blue particles, which implicitly restricts the set of extraction unitaries he can perform. However, because Morty 3's ignorance grants access to a larger Hilbert space, an unconstrained unitary could, in principle, swap the particle colors. To preserve the intended hierarchy between the agents, we ensure that Rick remains the more capable one.

Furthermore, let $\ham_M$ denote the Hamiltonian used by Morty 3 to describe the system's evolution and to assign energy to its states. This Hamiltonian does not account for red and blue particles, as Morty 3 is unaware of their existence. In contrast, let $\ham$ denote the actual Hamiltonian, which includes a full description of particles of both colors.
We assume that the total (actual) Hamiltonian conserves the number of particles of each color,
\begin{align}\label{eq:actual_ham}
&(a7) &\ham=\bigoplus_{N_+=0}^N H^{N,N_+}.
\end{align}
This means that the blocks defined by $N$ and $ N_+$ act and evolve independently of each other. In practice, $(a6)$ and $(a7)$ impose conservation of the probability of finding the state in each subspace $\HS_+^{N_+}\otimes \HS_-^{N-N_+}$.

Finally, we assume that $(a8)$ the actual Hamiltonian is independent of particle color and consistent with Morty 3’s description. For example, if $\ket{011}+\ket{110}$ is an energy eigenvector of $\ham_M$, then $\ket{0+-}+\ket{-+0}$ must also be an eigenvector of $\ham$ with the same energy. In other words, the Hamiltonian depends only on the configuration---that is, the arrangement of particles on the lattice---not on their color. This ensures that Morty can correctly determine the energy of a state by inspecting its configuration alone. It is further assumed that Morty 3 possesses a Hamiltonian which, although incorrect in structure, assigns correct energies to all relevant states.

For this, it will be useful to define a linear mapping $\K$ between the Morty 2 and Morty 3 Hilbert spaces, $\HS$ and $\HS^{M}\equiv\HS^{\mathrm{M}_3}$, defined by its action on the basis vectors which erases the information about the type of particle, i.e., for example
\[
\K(\ket{00++0-})=\ket{001101}
\]
We call the vector on the right-hand side \emph{configuration} and on the left-hand side \emph{inner configuration}. The linearity means that for two inner configuration states $\tilde c_1,\tilde c_1\in \HS$, and a complex number $\alpha$ we have
\[
\K(\alpha \tilde c_1+\tilde c_2)=\alpha\K(\tilde c_1)+\K(\tilde c_1).
\]
Given a vector $c\in \HS^M$ , we define a one-dimensional Hilbert space $\HS_c^M=\mathrm{span}\{c\}$.
We define the pre-image of this Hilbert space as,
\[\label{eq:Hc_def}
\HS_c\equiv \K^{-1}(\HS_c^M)=\{\tilde c \in \HS | \K(\tilde c)\in \HS_c^M\}.
\]
For example, for $c_1=(\ket{01}+\ket{10})/\sqrt{2}$, we have
\[
\begin{split}
\HS_{c_1}=\mathrm{span}\{&(\ket{0+}+\ket{+0})/\sqrt{2},(\ket{0+}+\ket{-0})/\sqrt{2},\\&(\ket{0-}+\ket{+0})/\sqrt{2},(\ket{0-}+\ket{-0})/\sqrt{2}\}.
\end{split}
\]
For $c_2=(\ket{01}-\ket{10})/\sqrt{2}$, we have
\[
\begin{split}
\HS_{c_2}=\mathrm{span}\{&(\ket{0+}-\ket{+0})/\sqrt{2},(\ket{0+}-\ket{-0})/\sqrt{2},\\&(\ket{0-}-\ket{+0})/\sqrt{2},(\ket{0-}-\ket{-0})/\sqrt{2}\}.
\end{split}
\]

$\HS_c$ is a Hilbert space. To prove that it is a vector space, assume a linear combination $\ket{\psi}=\sum_i \alpha_i\ket{\phi_i}$, where $\ket{\phi_i}\in \HS_{c}$. Then from the linearity we have,
\[
\begin{split}
\K \ket{\psi}&=\K \big(\sum_i \alpha_i\ket{\phi_i}\big)=\sum_i \alpha_i\K(\ket{\phi_i})\\
&=c \sum_i \alpha_i\beta_i\in \HS_c^M.
\end{split}
\]
Additionally, it is equipped with an inner product induced from $\HS$, and it is finite. Thus, it is complete.

Note, however, that surprisingly also $(\ket{+0}+\ket{-0})\in\HS_{c_1}$ and $(\ket{0+}-\ket{0-})\in\HS_{c_2}$ are in the kernel, corresponding to zero vector in $\HS_{c_1}^M$ and $\HS_{c_2}^M$, respectively. There is only a finite number of such non-trivial vectors in the kernel, which is a set of measure zero, since it requires a very specific combination of parameters so that $\sum_i \alpha_i\beta_i=0$. This is because the vectors are always characterized by a combination that maps onto the same vector, such as $\ket{+-}-\ket{-+}$ or similar, which maps onto $\ket{11}-\ket{11}=0$. First, physical states always contain some noise, so this pathological situation should not occur in practice. Second, the situation where the state is $0$ corresponds to not having a system; thus, we will ignore it. The second observation is that although $c_1$ and $c_2$ are orthogonal, not every two vectors in $\ket{\psi_1}\in\HS_{c_1}$ and $\ket{\psi_1}\in\HS_{c_2}$ are. E.g., $(\ket{0+}+\ket{-0})/\sqrt{2}$ and $(\ket{0-}-\ket{-0})/\sqrt{2}$ have squared overlap $1/2$. Even if we restrict ourselves to a subspace $\HS_+^{N_+}\otimes \HS_-^{N-N_+}$, still, one can find two vectors created as preimages from orthogonal vectors which have non-zero overlap. If the preimages were, however, created in different subspaces $\HS_+^{N_+}\otimes \HS_-^{N-N_+}$, they are orthogonal by default.

Consider an orthonormal basis $\{c\}$, so that $\HS^M=\mathrm{span}\{c\}$, following the argument after Eq.~\label{eq:Hc_def}. Then the subspaces created from the pre-images combine into the entire Hilbert space,
\[
\HS=\bigoplus_c\HS_c.
\]
As we mentioned above, states from different $\HS_c$ are not orthogonal. However, they are still linearly independent. To prove this, one can assume $\ket{\psi}\in \HS_{c'}$, can be written as a linear combination of vectors from $\HS_{c}$, $c\neq c'$. However, if that is possible, then $\K(\ket{\psi})=\alpha c$ for some number $\alpha$, meaning that $\ket{\psi}\in \HS_c$, which is in contradiction with the assumption that $c\neq c'$. This means that the direct sum is well-defined. Additionally, one can always find an orthonormal basis of each $\HS_c$ so that each vectors from there are orthogonal to each other. This, however, doesn't necessarily imply that they will be orthogonal to vectors from another subspace, $\HS_{c'}$, unless there is some additional structure involved (such as when different subspaces are eigensubspaces of a Hermitian operator).

Assumption $(a8)$ says that the Hamiltonian $H^{N,N_+}$ does not depend on the inner configuration. Consider that Morty 3 has a Hamiltonian 
\[
H_M=\sum_E E \ketbra{E}{E},
\]
which judges the energy correctly. Energy eigenstates $\ket{E}$ are a superposition of different configurations. This means that all eigenstates of the actual Hamiltonian that map onto the same eigenstate of Morty 3's Hamiltonian must correspond to the same energy. In other words, combining this with assumption $(a7)$, Eq.~\eqref{eq:actual_ham}, the actual Hamiltonian is given by
\begin{align}\label{eq:Ham_subsystem}
H^{N,N_+}=\sum_E E \P_E^{N,N_+},
\end{align}
where $\P_E^{N,N_+}$ is a projector onto a subspace that is the overlap of two vector spaces
\[
\HS_{\ket{E}} \cap \HS_+^{N_+}\otimes \HS_-^{N-N_+}.
\]
The overlap means that superselection rules still apply during evolution, i.e., superpositions of states that involve differing numbers of particles of each type, such as $(\ket{0-}+\ket{+0})/\sqrt{2}$, are not created while the system evolves.

Similarly, assuming that Morty thinks that he applies a unitary with spectral decomposition
\[\label{eq:UM}
U_M = \sum_u u \ketbra{u}{u},
\]
the unitary he actually applies will be given by
\[\label{eq:actual_U}
U^{N,N_+} = \sum_u u\, U_u^{N,N_+},
\]
inserted into~\eqref{eq:actual_unitary}. $U_u^{N,N_+}$ is an operator acting on
\[\label{eq:u_subspace}
\HS_+^{N_+}\otimes \HS_-^{N-N_+},
\]
which acts as a unitary on its subspace $\HS_{\ket{u}}$, while zero on its complement,
\[\label{eq:supportU}
U_u^{N,N_+}\equiv U_u^{N,N_+}\oplus 0_{\HS_{\ket{u}}^\perp}.
\]
The reasoning behind this is that only those internal states which are indistinguishable to Morty 2 can be freely mapped onto each other via unitary transformations—such changes will go unnoticed by him. From his perspective, the evolution appears exactly as if the unitary $U_M$ had been applied. If this were not the case, he could, in principle, perform a measurement revealing that the unitary did not act as expected, violating assumption $(a3)$ and contradicting Eq.~\eqref{eq:postUnitary_Measure_appC}. The structure of the unitary operator~\eqref{eq:actual_U} reflects the structure of \eqref{eq:UM}: the same way the rank-1 projector $\ketbra{u}$ acts as an identity unitary operator on the one-dimensional Hilbert space spanned by itself, it acts as a zero operator on all other vectors $\ket{u'}\perp \ket{u}$. The inner structure~\eqref{eq:supportU} reflects that.

This definition leads to,
\[
U_u^{N,N_+}U_{u'}^{N,N_+\dag}=
\begin{cases}
\I_{\HS_{\ket{u}}}\oplus 0_{\HS_{\ket{u}}^\perp}, & u=u',\\
0, & u\neq u'.
\end{cases}
\]
Using this, we can prove unitarity,
\[
\begin{split}
U^{N,N_+}U^{N,N_+\dag}&=\sum_u u\, U_u^{N,N_+} \sum_{u'} u'\, U_{u'}^{N,N_+}\\
&=\sum_{u,u'}\delta_{uu'}u u' \I = \sum_u \abs{u}^2 \I_{\HS_{\ket{u}}}\oplus 0_{\HS_{\ket{u}}^\perp}\\
&=\sum_u \I_{\HS_{\ket{u}}}\oplus 0_{\HS_{\ket{u}}^\perp}= \I,
\end{split}
\]
because $\abs{u}^2=1$. The meaning of $U_u^{N,N_+}$ is that it can mix all vectors within the subspace in which Morty 3 is unable to observe any changes, while outwardly appearing to act exactly as $U_M$ does.

Finally, if the density matrix describing the system in the Morty 3 framework is
\[
\R_M = \sum_r r \ketbra{r}{r},
\]
then the actual density matrix is given by inserting
\[\label{eq:actual_density_matrix}
\R_{N_+} = \sum_r r \R_r^{N,N_+}
\]
into Eq.~\eqref{eq:actual_density}. $\R_r^{N,N_+}$ is a linear operator acting on 
\[
\HS_+^{N_+}\otimes \HS_-^{N-N_+}.
\]
which acts as density matrix on its subspace $\HS_{\ket{r}}$ while zero on its complement,
\[\label{eq:inner_structure_Rr}
\R_r^{N,N_+}\equiv \R_r^{N,N_+}\oplus 0_{\HS_{\ket{r}}^\perp}.
\]
This construction ensures that outwardly the $\R$ appears exactly as $\R_M$, since all the integral degrees of freedom (described by $\R_r$) map onto the same external (described by $\ketbra{r}$) visible to Morty 3, while the external probabilities, $r$, which are observable by Morty 3, stay the same. If it had some other structure, it would be distinguishable by some measurement, and thus in contradition with $(a2)$, Eq.~\eqref{eq:preUnitary_Measure_appC}.

Finally, using the map $\K$ we can rephrase assumption $(a1)$ and derive the prescription for the measurement Morty 3 actually performs. Given Morty 3's $\C_M=\{\P_i^M\}$ description of the measurement he thinks he performs, the actual measurement he performs is
\begin{align}\label{eq:projector_real}
&(a1) &\C=\{\P_i\}.
\end{align}
There, $\P_i$ is the projector onto 
\[
\HS_{\P_i^M}:=\bigoplus_k \HS_{\ket{\psi_{ik}^M}},
\]
where we use the spectral decomposition into orthonormal projectors, $\P_i^M=\sum_k\ketbra{\psi_{ik}^M}$. Note that $\ket{\psi_{ik}}\in \HS_{\ket{\psi_{ik}^M}}$ and $\ket{\psi_{ik'}}\in \HS_{\ket{\psi_{ik'}^M}}$ are not necessarily orthogonal for $k\neq k'$, however, one can always employ an orthogonalization procedure from which $\P_i$ is explicitly formed. Further, subspaces $\HS_{\P_i^M}$ and $\HS_{\P_{i'}^M}$ are not necessarily orthogonal for $i\neq i'$, but they will be if they correspond to different eigenvalues of some normal operator, such as a Hamiltonian or a unitary operator. (We use this several times below.) We could additionally assume that Morty 3 is unable to project onto superpositions of inner configurations—analogous to our assumption on the Hamiltonian in Eq.~\eqref{eq:actual_ham}---to prevent the creation of forbidden superpositions, such as $\ket{+} + \ket{-}$, through measurement. However, this assumption is unnecessary to prove the statement we are trying to reach.

One can verify that this definition indeed corresponds to our earlier, less formal prescription given in Eq.~\eqref{eq:morty3_measurement_initial}. As an additional example, consider a rank-2 projector corresponding to a measurement that detects a total particle number of one.
\[
\P_1^M=\ketbra{10}{10}+\ketbra{01}{01}.
\]
The actual projector corresponding to this is
\[
\begin{split}
\P_1
&=\ketbra{+0}{+0}+\ketbra{0+}{0+}+\ketbra{-0}{-0}+\ketbra{0-}{0-}.
\end{split}
\]
Finally, consider a projector which projects onto a superposition of number states, e.g.,
\[
P_i^M=\ketbra{\psi^M}{\psi^M}\quad \ket{\psi^M}=a_1\ket{011}+a_2\ket{110}.
\]
When restricting ourselves to the assumptions $(a4-a7)$ which forbids superpositions such as $a_1\ket{0--}+a_2\ket{++0}$, the corresponding Hilbert space is,
\[
\HS_{\ket{\psi^M}}=\HS_1\oplus\HS_2\oplus\HS_3,
\]
where
\[
\begin{split}
\HS_1=\mathrm{span}\{&a_1\ket{0++}+a_2\ket{++0}\}\\
\HS_2=\mathrm{span}\{&a_1\ket{0--}+a_2\ket{--0}\}\\
\HS_3=\mathrm{span}\{&a_1\ket{0+-}+a_2\ket{+-0},\\
&a_1\ket{0+-}+a_2\ket{-+0},\\
&a_1\ket{0-+}+a_2\ket{+-0},\\
&a_1\ket{0-+}+a_2\ket{-+0}\}.\\
\end{split}
\]
and allowed states follow the superselection rule $\R=\sum_i\P_i\R\P_i$, where $\P_i$ projects onto $\HS_i$. As mentioned above, this is an example of where not all of these basis vectors are orthogonal, but can be made orthogonal by orthogonalization procedure. 

As a side note, this can be further generalized to any observable: consider Morty 3 has an observable, $O^M=\sum_o o \P_o^M$. The corresponding actual observable is $O=\sum_o o \P_o$, where $\P_o$ is the projector onto $\HS_{\P_o^M}$. The Hamiltonian, Eq.~\eqref{eq:Ham_subsystem}, already follows this pattern, with an additional assumption of Eq.~\eqref{eq:actual_ham}.

\subsection{Extracted work by Morty 3}

Now we have most of the ingredients for deriving Morty 3's actual extracted work. 

We are interested in whether Morty 3's estimate of the extracted work matches the actual extracted work, given the set of consistency assumptions $(a1$--$a3)$, the superselection rule stating that no superposition between different types of particles is allowed, and the assumption that such superselection sectors are preserved under both the dynamics and applied unitaries $(a4$--$a7)$. Finally, we assume that Morty 3 can correctly judge the system's energy $(a8)$, meaning that the Hamiltonian depends only on the configuration and not on the colors of the particles.

Morty 3's estimate of work, for a single copy of the state of the system (i.e., without simultaneous estimation from multiple copies) is~\cite{safranek_2023_work},
\[\label{eq:extracted_work_morty3app}
\begin{split}
W_{\C_M}&=\int_{\tilde U_M} \tr[\ham_M(\rho- U_M^{\mathrm{ext}}\tilde U_M\R\tilde U_M^\dag U_M^{\mathrm{ext}\dag})] \mu (\tilde U_M)\\
&=\tr[\ham_M(\R_M-\pi_M^\cg )],
\end{split}
\]
where
\[
\pi_M^\cg = U_M^{\mathrm{ext}}\R_M^\cg U_M^{\mathrm{ext}\dag},
\]
is a passive state to $\R_M^{\cg}=\sum_i \frac{p_i}{V_i^M}\P_i^M = \int \tilde U_M \R_M \tilde U_M^\dag \mu (\tilde U_M)$, where $\Pi_i^M$, where $\C_M=\{\P_i^M\}$, $V_i^M=\tr[\P_i^M]$. $\tilde U$ is a random unitary randomizing vectors in each subspace $\HS_i$, onto which $\P_i^M$ projects, picked with the Haar measure, and $U_M^{\mathrm{ext.}}$ is the extraction unitary which reorders the eigenvalues from the largest to the smallest and associates them to the energy eigenvectors from the lowest to the highest energy. $U_M=U_M^{\mathrm{ext}}\tilde U_M$ is the full unitary that Morty 3 thinks he applies on the system, consisting both of the random and the extraction unitary.

In reality, however, he performs a measurement in the basis $\C=\{\P_i\}$, where projectors $\P_i$ include all the different colors that lead to the same configurations as vectors is $\P_i$. This is mathematically described in Eq.~\eqref{eq:projector_real}, denoted $\C_M\rightarrow \C$. His actual state of the system is $\R_M \rightarrow \R$, Eq.~\eqref{eq:actual_density_matrix}. The random unitary he actually applies is $U_M\rightarrow U$, Eq.~\eqref{eq:actual_U}, which consists both of the random and the extraction unitary, $U=U^{\mathrm{ext}}\tilde U$, $\tilde U_M\rightarrow \tilde U$, $U_M^{\mathrm{ext}}\rightarrow U^{\mathrm{ext}}$. The actual Hamiltonian is $\ham_M\rightarrow\ham$, Eq.~\eqref{eq:Ham_subsystem}.

The actual energy he will extract on average is then
\[\label{eq:extracted_work_morty3actual}
\begin{split}
&W_{\C}=\int_{\tilde U}\tr[\ham(\R-U\R U^\dag)] \nu (\tilde U),
\end{split}
\]
where $\nu$ is a measure over unitaries $\tilde U$. We examine each term individually. First, however, we prove the following lemma, which will be useful later.

\begin{lemma}\label{lemma:overlap}
Let $\ket{\psi_M}$ an arbitrary but normalized vector in Morty 3's description, a normalized vector $\ket{\psi}\in \HS_{\ket{\psi_M}}\cap \HS_+^{N_+}\otimes \HS_-^{N-N_+}$, and $\P_E$ is a projector onto $\HS_{\ket{E}}\cap \HS_+^{N_+}\otimes \HS_-^{N-N_+}$. Then
\[
\tr[\P_E\ketbra{\psi}{\psi}]=\abs{\braket{E}{\psi_M}}^2.
\]
\end{lemma}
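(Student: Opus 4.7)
My plan is to exploit the natural isometry between the symmetric subspace of the fixed-particle-number sector of $\HS$ and the corresponding $N$-particle subspace of Morty 3's Hilbert space $\HS^M$, which reduces the overlap computation to an inner product in $\HS^M$.

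First, I would decompose the sector $\HS_+^{N_+}\otimes\HS_-^{N-N_+}$ into fibers over Morty 3's configurations: every $N$-particle configuration $\ket{c}$ has a fiber $F_c=\mathrm{span}\{\ket{\tilde c}:\K(\ket{\tilde c})=\ket{c}\}$ of dimension $\binom{N}{N_+}$, with a distinguished normalized symmetric vector $\ket{s_c}=\binom{N}{N_+}^{-1/2}\sum_{\tilde c\in F_c}\ket{\tilde c}$. A direct computation shows $\K\ket{s_c}=\sqrt{\binom{N}{N_+}}\ket{c}$, while $\K$ annihilates the orthogonal complement of $\ket{s_c}$ inside $F_c$. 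Consequently, $W=\K^{\dagger}/\sqrt{\binom{N}{N_+}}$ is an isometry from the $N$-particle sector of $\HS^M$ onto the symmetric subspace $\HS_{\mathrm{sym}}=\bigoplus_c\mathrm{span}\{\ket{s_c}\}$, and the kernel of $\K$ within the sector is exactly $\HS_{\mathrm{sym}}^{\perp}$.

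Next, I would expand $\ket{\psi_M}=\sum_c\beta_c\ket{c}$ and $\ket{E}=\sum_c\delta_c\ket{c}$ in the configuration basis and identify both $\ket{\psi}$ and $\P_E$ via this expansion. The hypothesis $\ket{\psi}\in\HS_{\ket{\psi_M}}$ means $\K\ket{\psi}\propto\ket{\psi_M}$; combined with the decomposition $\ket{\psi}=\ket{\psi}_{\mathrm{sym}}+\ket{\psi}_{\perp}$, the normalization $\|\ket{\psi}\|=1$, and the consistency assumptions $(a1)$--$(a3)$ (which force the probability of any admissible measurement to agree with the one predicted from $\ket{\psi_M}$), one concludes that $\ket{\psi}$ coincides, up to a phase, with $W\ket{\psi_M}=\sum_c\beta_c\ket{s_c}$. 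The same orthogonalization prescription described below Eq.~\eqref{eq:projector_real}, applied to the spectral decomposition of $\P_E^{M}=\ketbra{E}{E}$, picks out $\P_E$ as the rank-one projector onto the symmetric representative $\ket{s_E}=W\ket{E}=\sum_c\delta_c\ket{s_c}$. The computation then collapses to one line: by the isometry property of $W$, $\braket{s_E}{\psi}=\braket{WE}{W\psi_M}=\braket{E}{\psi_M}$, so $\tr[\P_E\ketbra{\psi}{\psi}]=|\braket{s_E}{\psi}|^2=|\braket{E}{\psi_M}|^2$.

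I expect the main obstacle to be not the algebra but the step that pins $\ket{\psi}$ down to the symmetric representative: the literal intersection $\HS_{\ket{E}}\cap\HS_+^{N_+}\otimes\HS_-^{N-N_+}$ contains the entire kernel $\HS_{\mathrm{sym}}^{\perp}$ of $\K$ within the sector, so a naive orthogonal projector onto this intersection would pick up an extra anti-symmetric contribution $\|\ket{\psi}_{\perp}\|^2$ and spoil the identity. The proof must therefore combine the orthogonalization convention defining $\P_E$ with the consistency requirement on admissible states to rule out anti-symmetric components, and verify that the proportionality constant from $\K\ket{\psi}=\alpha\ket{\psi_M}$ is fixed to $|\alpha|=\sqrt{\binom{N}{N_+}}$ by the normalization of $\ket{\psi}$. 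Once these two compatibilities are established, the isometry $W$ does the rest of the work essentially for free.
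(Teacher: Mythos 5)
Your isometry observation (the symmetric vectors $\ket{s_c}$ span the orthogonal complement of $\ker\K$ inside the sector, and $c\mapsto\ket{s_c}$ is an isometry) is correct, but the step that carries all the weight --- ``one concludes that $\ket{\psi}$ coincides, up to a phase, with $W\ket{\psi_M}$'' --- is not available, and with it the proof collapses to a strictly weaker statement than Lemma~\ref{lemma:overlap}. The lemma is asserted for an \emph{arbitrary} normalized $\ket{\psi}\in\HS_{\ket{\psi_M}}\cap \HS_+^{N_+}\otimes\HS_-^{N-N_+}$, and this set contains much more than the symmetric representative: for $\ket{\psi_M}=\ket{110}$ and $N_+=1$, both $\ket{+-0}$ and $\ket{-+0}$ lie in it, and neither is proportional to $(\ket{+-0}+\ket{-+0})/\sqrt{2}$. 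Nothing in $(a1)$--$(a3)$ removes such vectors; on the contrary, the whole construction (Eqs.~\eqref{eq:inner_structure_Rr} and \eqref{eq:supportU}) deliberately allows the actual state $\R_r^{N,N_+}$ and the actual unitary $U_u^{N,N_+}$ to act arbitrarily inside $\HS_{\ket{r}}$ and $\HS_{\ket{u}}$, precisely because Morty 3 cannot detect the difference, and the lemma is subsequently applied to the generally non-symmetric eigenvectors $\ket{\psi_j^{r,N,N_+}}$ and to vectors of the form $U^\dagger\ket{\psi_E^k}$ in Lemma~\ref{lemma:Utransformed_PE}. Likewise, replacing $\P_E$ by the rank-one projector onto $\ket{s_E}=W\ket{E}$ contradicts the lemma's own hypothesis ($\P_E$ projects onto the full, generically high-dimensional intersection $\HS_{\ket{E}}\cap\HS_+^{N_+}\otimes\HS_-^{N-N_+}$; the orthogonalization remark below Eq.~\eqref{eq:projector_real} reorganizes a basis of that subspace, it does not shrink it to the symmetric line). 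And the restriction is not a harmless convention: with the rank-one choice the claimed identity fails for non-symmetric inputs, e.g.\ for $\ket{E}=(\ket{110}+\ket{101})/\sqrt{2}$ one gets $\abs{\braket{s_E}{+-0}}^2=\tfrac14\neq\tfrac12=\abs{\braket{E}{110}}^2$. So what you prove (symmetric representative tested against a symmetric rank-one projector, where the identity is indeed immediate from the isometry) does not imply the lemma and does not support its use in deriving Eq.~\eqref{eq:trace_energy}.

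The paper's own proof takes a different and more general route that you would need to reproduce: it expands $\ket{\psi_M}=\sum_E\alpha_E\ket{E}$ with $\alpha_E=\braket{E}{\psi_M}$, argues from linearity of $\K$ and orthogonality of the $\ket{E}$ that any $\ket{\psi}$ in the preimage has the form $\ket{\psi}=\sum_E\alpha_E\ket{\psi_E}$ with normalized $\ket{\psi_E}\in\HS_{\ket{E}}$, and then uses that the $\ket{\psi_E}$ belong to eigenspaces of the Hermitian $\ham$ with distinct eigenvalues to get $\P_E\ket{\psi}=\alpha_E\ket{\psi_E}$, hence $\tr[\P_E\ketbra{\psi}{\psi}]=\abs{\alpha_E}^2$ for \emph{every} admissible $\ket{\psi}$ --- no symmetrization is invoked anywhere. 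Your closing paragraph correctly identifies the obstacle (the component of $\ket{\psi}$ in $\ker\K$, i.e.\ off the symmetric subspace), but the proposed resolution --- that the consistency assumptions rule out such components --- is exactly the unjustified step; if you want to salvage the fiber/isometry language, you must instead control how a general preimage vector distributes over the subspaces $\HS_{\ket{E}}$, which is what the paper's energy-basis decomposition accomplishes.
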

\begin{proof}
We expand $\ket{\psi_M}$ in the basis of $\ket{E}$,
\[
\ket{\psi_M}=\sum_{E} \alpha_{E}\ket{E},\quad \alpha_E=\braket{E}{\psi_M}.
\]
This means that whatever $\ket{\psi}$ is, it must have a form of
\[
\ket{\psi}=\sum_{E} \alpha_{E}\ket{\psi_E}.
\]
where $\ket{\psi_E} \in \HS_E$. This can be proven by contradiction by applying the operator $\K$ and using orthogonality of $\ket{E}$. Since $\ket{\psi_E}$ are eigenvectors of a Hermitian operator (Hamiltonian $\ham$) corresponding to different eigenvalues, they must be orthogonal to each other. More specifically, we have $\P_E\ket{\psi_{E'}}=0$ for any $E'\neq E$. This implies
\[\label{eq:crucial}
\P_E\ket{\psi}=\alpha_{E}\ket{\psi_E},
\]
from which follows,
\[
\tr[\P_E\ketbra{\psi}{\psi}]=\bra{\psi}\P_E^2\ket{\psi}=\abs{\alpha_{E}}^2=\abs{\braket{E}{\psi_M}}^2.
\]
\end{proof}

We also make one useful observation.
\begin{corollary}\label{corr:observation}
The Lemma~\eqref{lemma:overlap} also holds for any other normal operator acting on the actual Hilbert space, rather than Hamiltonian $\ham=\sum_E E \P_E$, such as a unitary operator $U=\sum_u u \P_u$.
\end{corollary}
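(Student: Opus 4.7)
The plan is to mirror the proof of Lemma~\ref{lemma:overlap} almost verbatim, identifying the single place where the Hermiticity of $\ham$ was invoked and replacing it with the general spectral property of normal operators. Concretely, let $O=\sum_o o\P_o$ be a normal operator on the actual Hilbert space $\HS$ (here, $\P_o$ is the projector onto $\HS_{\ket{o}}\cap\HS_+^{N_+}\otimes\HS_-^{N-N_+}$), with Morty 3's corresponding description $O_M=\sum_o o\ketbra{o}{o}$. Let $\ket{\psi_M}$ be a normalized vector in Morty 3's Hilbert space and $\ket{\psi}\in\HS_{\ket{\psi_M}}\cap\HS_+^{N_+}\otimes\HS_-^{N-N_+}$. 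I would aim to show
\[
\tr[\P_o\ketbra{\psi}{\psi}]=\abs{\braket{o}{\psi_M}}^2.
\]

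First, expand $\ket{\psi_M}=\sum_o\alpha_o\ket{o}$ with $\alpha_o=\braket{o}{\psi_M}$. By exactly the same linearity and pre-image argument used in Lemma~\ref{lemma:overlap} (applying the map $\K$ to a hypothetical alternative decomposition and invoking orthogonality of the $\ket{o}$ to derive a contradiction), any $\ket{\psi}\in\HS_{\ket{\psi_M}}$ admits the decomposition $\ket{\psi}=\sum_o\alpha_o\ket{\psi_o}$ with $\ket{\psi_o}\in\HS_{\ket{o}}$, where the coefficients $\alpha_o$ are inherited from the expansion of $\ket{\psi_M}$. This step is structural and uses nothing about the operator besides the orthonormality of its eigenbasis in the Morty~3 description, so it carries over unchanged.

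The only substantive step is the orthogonality $\P_o\ket{\psi_{o'}}=0$ for $o\neq o'$, which in the proof of Lemma~\ref{lemma:overlap} was justified by saying that eigenvectors of a Hermitian operator corresponding to distinct eigenvalues are orthogonal. The key observation is that this very property is exactly what the excerpt has already flagged when introducing the projectors $\P_i$ (the sentence noting that $\HS_{\P_i^M}$ and $\HS_{\P_{i'}^M}$ are orthogonal whenever they correspond to different eigenvalues of \emph{any} normal operator). Since normal operators admit a spectral decomposition into mutually orthogonal eigenprojectors, the subspaces $\HS_{\ket{o}}\cap\HS_+^{N_+}\otimes\HS_-^{N-N_+}$ are mutually orthogonal for distinct $o$, so $\P_o\ket{\psi_{o'}}=0$ as required. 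Substituting this into $\P_o\ket{\psi}=\sum_{o'}\alpha_{o'}\P_o\ket{\psi_{o'}}=\alpha_o\ket{\psi_o}$ reproduces the crucial identity~\eqref{eq:crucial}, and the final line of the Lemma's proof then yields $\tr[\P_o\ketbra{\psi}{\psi}]=\abs{\alpha_o}^2=\abs{\braket{o}{\psi_M}}^2$.

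The only potential obstacle is checking that the construction of $\P_o$ as a projector onto $\HS_{\ket{o}}\cap\HS_+^{N_+}\otimes\HS_-^{N-N_+}$ is well defined and inherits the orthogonality structure from the Morty~3 spectral decomposition; this is precisely the content of the parenthetical remark made just before Corollary~\ref{corr:observation} and is a property of any normal operator via the spectral theorem. No new computation is required beyond noting that the Hamiltonian-specific language in Lemma~\ref{lemma:overlap} was merely illustrative, and that for a unitary operator in particular the identity $\tr[\P_u\ketbra{\psi}{\psi}]=\abs{\braket{u}{\psi_M}}^2$ is exactly what is needed for the subsequent analysis of the random and extraction unitaries appearing in Eq.~\eqref{eq:extracted_work_morty3actual}.
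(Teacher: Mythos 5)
Your proposal is correct and follows exactly the same route as the paper: the paper's own proof simply observes that the orthogonality of eigenspaces corresponding to distinct eigenvalues — the only property used in establishing Eq.~\eqref{eq:crucial} — holds for any normal operator, which is precisely the single substitution you identify. Your write-up just spells out the unchanged structural steps of Lemma~\ref{lemma:overlap} more explicitly; no difference in substance.
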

\begin{proof}
Any normal operator has orthogonal eigenspaces, which the only central assumption in Eq.~\eqref{eq:crucial}.
\end{proof}

For the first term in Eq.~\eqref{eq:extracted_work_morty3actual}, we have
\[
\begin{split}
\int_{\tilde U}\tr[\ham\R] \nu (\tilde U)&=\tr[\ham\R]\\
&=\tr[\bigoplus_{N_+=0}^N H^{N,N_+}\bigoplus_{N_+=0}^N \lambda_{N_+} \R_{N_+}]\\
&=\sum_{N_+=0}^N \lambda_{N_+} \tr[H^{N,N_+}\R_{N_+}]\\
&=\sum_{N_+=0}^N \lambda_{N_+} \tr[\sum_E E \P_E^{N,N_+}\sum_r r \R_r^{N,N_+}]\\
&=\sum_{N_+=0}^N\sum_E \sum_r E r \lambda_{N_+} \tr[\P_E^{N,N_+} \R_r^{N,N_+}]\\
&=\sum_{N_+=0}^N\sum_E \sum_r E r \lambda_{N_+} \abs{\braket{E}{r}}^2\\
&=\sum_E \sum_r E r \abs{\braket{E}{r}}^2\\
&=\tr[\ham_M \R_M].
\end{split}
\]
We have used
\[\label{eq:trace_energy}
\tr[\P_E^{N,N_+} \R_r^{N,N_+}] = \abs{\braket{E}{r}}^2.
\]
This follows from Eq.~\eqref{eq:inner_structure_Rr}, which says that $\R_r^{N,N_+}$ has support only on $\HS_{\ket{r}}$, meaning that it has a spectral decomposition 
\[\label{eq:inner_density_matrix}
\R_r^{N,N_+}=\sum_j \beta_j^r \ketbra{\psi_j^{r,N,N_+}}{\psi_j^{r,N,N_+}},
\]
$\ket{\psi_j^{r,N,N_+}}\in \HS_{\ket{r}}$.  Applying Lemma~\ref{lemma:overlap} yields
\[
\begin{split}
\tr[\P_E^{N,N_+} \R_r^{N,N_+}] &\!=\! \sum_j \beta_j^r\tr[\P_E^{N,N_+}  \!\ketbra{\psi_j^{r,N,N_+}}{\psi_j^{r,N,N_+}\!}]\\
&\!=\!\sum_j \beta_j^r\tr[\P_E^{N,N_+}  \abs{\braket{E}{r}}^2] = \abs{\braket{E}{r}}^2.
\end{split}
\]

Before we can address the second term, we need to discuss the measure $\mu (\tilde U)$. From the definition of the protocol, Morty 3 chooses a Haar measure over the unitary group $\mu (\tilde U)$, where unitaries act on the Hilbert space $\HS_3$, describing a chain of $N$ colorless particles. In reality, he applies some other random unitary on the Hilbert space, with some arbitrary measure over the degrees of freedom he cannot observe. Thus, to define the corresponding measure, we look for any measure $\nu$ of the unitary group $\tilde U(\HS)=:X$ such that it reduces to the Haar measure $\mu$ for $\tilde U_M(\HS_3)=:Y$.

Let us define map $f:X \rightarrow Y$, using the prescription of Eq.~\eqref{eq:actual_U},
\[
\tilde U = \bigoplus_{N_+=0}^N \sum_{\tilde u} {\tilde u}\, \tilde U_{\tilde u}^{N,N_+}
\overset{f}{\longrightarrow} \tilde U_M=\sum_{\tilde u} {\tilde u} \ketbra{\tilde u}{\tilde u}.
\]

For a measurable subset $A \subset X$, we define the measure
\[\label{eq:full_measure}
\nu(A) = \int_Y \kappa(\tilde U_M, A \cap f^{-1}(\tilde U_M)) \, d\mu(\tilde U_M),
\]
where $f^{-1}(\tilde U_M)$ is a fiber and $\kappa(\tilde U_M, \bullet)$ is a conditional probability measure supported on that fiber. We require that
\[
\kappa(\tilde{U}_M, f^{-1}(\tilde{U}_M)) = 1.
\]
This conditional probability encodes an arbitrary choice over the unobservable degrees of freedom, while the normalization condition ensures that integrating over them reproduces the Haar measure. This can be directly confirmed by integrating $\nu$ over the preimage of a measurable subset $B \subset Y$,
\begin{align*}
\nu(f^{-1}(B)) 
&= \int_Y \kappa(\tilde U_M, f^{-1}(B) \cap f^{-1}(\tilde U_M)) \, d\mu(\tilde U_M) \\
&= \int_Y \mathbf{1}_B(\tilde U_M) \cdot \kappa(\tilde U_M, f^{-1}(\tilde U_M)) \, d\mu(\tilde U_M) \\
&= \int_B \underbrace{\kappa(\tilde U_M, f^{-1}(\tilde U_M))}_{=1} \, d\mu(\tilde U_M) \\
&= \int_B d\mu(\tilde U_M) = \mu(B).
\end{align*}
Here, we used the indicator function $\mathbf{1}_B : Y \to \{0, 1\}$, which selects whether a given fiber $f^{-1}(\tilde U_M)$ lies in the pre-image $f^{-1}(B)$,
\[
\mathbf{1}_B(y) =
\begin{cases}
1, & \text{if } y \in B, \\
0, & \text{if } y \notin B.
\end{cases}
\]
In summary, this construction ensures that the pushforward of $\nu$ under $f$ recovers the original Haar measure,
\[
f_* \nu = \mu.
\]

We will also need the following Lemma.
\begin{lemma}\label{lemma:Utransformed_PE}
Let $\ham=\sum_E E\P_E$ be an actual Hamiltonian corresponding to Hamiltonian $\ham_M=\sum_E E \ketbra{E}{E}$ in the Morty 3's frame, i.e., $\P_E$ projects onto $\HS_{\ket{E}}$. Let $\tilde \P_E:= U^\dag \P_E U$ be a projector of a transformed Hamiltonian $\tilde \ham = U^\dag \ham U=\sum_E E \tilde \P_E$. There,
\[
U=\sum_u u\, U_u\oplus 0_{\HS_{\ket{u}}^\perp}
\]
is an actual unitary operator corresponding to unitary operator $U_M=\sum u \ketbra{u}{u}$ in the Morty 3's frame, given by construction in Eq.~\eqref{eq:supportU}, where $U_u$ is a unitary operator acting on subspace $\HS_{\ket{u}}$. Then projectors $\tilde P_E$ project onto subspace $\HS_{U_M^\dag\ket{E}}$.
\end{lemma}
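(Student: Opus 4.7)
The goal is to identify the range of $\tilde \P_E = U^\dagger \P_E U$ and show it equals $\HS_{U_M^\dagger \ket{E}}$. That $\tilde \P_E$ is a projector is immediate from unitarity of $U$ and self-adjointness of $\P_E$: $\tilde \P_E^2 = U^\dagger \P_E^2 U = \tilde \P_E$ and $\tilde \P_E^\dagger = \tilde \P_E$. Its range is $U^\dagger \HS_{\ket{E}}$, reducing the problem to the subspace identity $U^\dagger \HS_{\ket{E}} = \HS_{U_M^\dagger \ket{E}}$, which I would establish by proving the intertwining relation $\K U = U_M \K$ on a fixed super-selection sector $\HS_+^{N_+} \otimes \HS_-^{N-N_+}$.

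To prove this intertwining relation, I would restrict to a single sector (using assumptions $(a4)$--$(a7)$) and exploit that the subspaces $\HS_{\ket{u}}$ associated to distinct eigenvalues of $U_M$ are mutually orthogonal there. Any sector vector $\ket{\xi}$ admits a unique orthogonal decomposition $\ket{\xi} = \sum_u \ket{\xi_u}$ with $\ket{\xi_u} \in \HS_{\ket{u}}$ and $\K\ket{\xi_u} = \beta_u \ket{u}$. Since $U_u$ vanishes on $\HS_{\ket{u}}^\perp$ and $U_u\ket{\xi_u}$ lies again in $\HS_{\ket{u}}$, one obtains $U\ket{\xi} = \sum_u u\, U_u\ket{\xi_u}$ and $\K(U\ket{\xi}) = \sum_u u\, \K(U_u\ket{\xi_u})$, while $U_M\K\ket{\xi} = \sum_u u\,\beta_u \ket{u}$. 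The two agree precisely when $\K(U_u\ket{\xi_u}) = \K\ket{\xi_u}$, expressing that $U_u$ preserves the value of the linear functional $\K|_{\HS_{\ket{u}}}$. Granted this, $U^\dagger\HS_{\ket{E}} \subseteq \HS_{U_M^\dagger \ket{E}}$ follows by applying $\K$ to $U^\dagger\ket{\xi}$ for $\ket{\xi}\in\HS_{\ket{E}}$, and the reverse inclusion follows symmetrically by repeating the argument with $U \leftrightarrow U^\dagger$ and $U_M \leftrightarrow U_M^\dagger$, yielding $U\HS_{U_M^\dagger\ket{E}} \subseteq \HS_{\ket{E}}$ and thus equality.

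The main obstacle is justifying $\K(U_u\ket{\xi_u}) = \K\ket{\xi_u}$, since a generic unitary on $\HS_{\ket{u}}$ could rotate the one-dimensional $\K$-image direction into the kernel of $\K|_{\HS_{\ket{u}}}$ and spoil the identification. This property must be invoked as a consequence of the consistency assumption $(a3)$: a $U_u$ violating it would alter the measured statistics on the $\ket{u}$ outcome relative to Morty 3's prediction from $U_M$, contradicting the defining requirement that $U$ act outwardly as $U_M$. Extracting this condition cleanly from $(a3)$, rather than adopting it as a tacit strengthening of the construction, is what I expect to require the most care in the full proof; once in hand, the remainder of the argument is essentially a bookkeeping exercise tracing how $\K$ carries $U$ to $U_M$.
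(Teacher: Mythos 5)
Your reduction of the lemma to the subspace identity $U^\dag\HS_{\ket{E}}=\HS_{U_M^\dag\ket{E}}$ is fine, and you have correctly located the crux; however, the proposal leaves that crux unproven, and the route you sketch for closing it does not go through. The property you need, $\K(U_u\ket{\xi_u})=\K(\ket{\xi_u})$ (equivalently the intertwining relation $\K U=U_M\K$ on each sector), is not among the lemma's hypotheses: the only structure assumed about $U$ is the block form of Eq.~\eqref{eq:supportU}, in which each $U_u$ is an \emph{arbitrary} unitary on $\HS_{\ket{u}}$. Since $\K$ restricted to $\HS_{\ket{u}}$ is not an isometry, an admissible block unitary can violate your condition --- for instance, a $U_u$ sending $\ket{+-0}$ (with $\K$-image $\ket{110}$) to $(\ket{+-0}+\ket{-+0})/\sqrt{2}$ (with $\K$-image $\sqrt{2}\,\ket{110}$), or into the kernel of $\K$ --- so the intertwining relation cannot simply be ``invoked as a consequence of $(a3)$.'' In the paper, $(a3)$ is a consistency requirement whose operational implementation \emph{is} the block form itself; the paper's proof never derives, and never uses, $\K U=U_M\K$. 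Deferring the decisive step to an assumption that is not available in the form you need, and acknowledging that you do not know how to extract it, is a genuine gap rather than a bookkeeping detail.

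The paper closes the argument by a different mechanism, and that mechanism is exactly the ingredient your proposal lacks. It spectrally decomposes $\P_E=\sum_k\ketbra{\psi_E^k}{\psi_E^k}$, writes $U^\dag\ket{\psi_E^k}=\sum_u u^*\,U_u\P_u\ket{\psi_E^k}$, and then uses Lemma~\ref{lemma:overlap} together with Corollary~\ref{corr:observation} to pin down the component norms, $\norm{\P_u\ket{\psi_E^k}}^2=\abs{\braket{u}{E}}^2$; each transformed eigenvector is then matched against the explicit characterization of vectors in $\HS_{U_M^\dag\ket{E}}$, Eq.~\eqref{eq:form_of_transformed}, and the resulting orthonormal family is identified as a basis of that subspace. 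In other words, the paper argues component-by-component inside each $\HS_{\ket{u}}$ using only the overlap lemma, rather than through a global commutation relation with $\K$; the quantitative control you would need (preservation of $\K$-weights by each $U_u$) is precisely what the overlap-lemma bookkeeping is used to replace. To salvage your route you would have to promote that preservation property to an explicit additional hypothesis on the $U_u$ --- a strengthening of Eq.~\eqref{eq:supportU} that is neither stated in the lemma nor used in the paper's proof --- so as written the proposal does not constitute a proof of the statement.
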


\begin{proof}
We explicitly express the definition,
\[
\HS_{U_M^\dag\ket{E}}=\big\{\ket{\psi}\, \big|\, \K (\ket{\psi}) = \alpha U_M^\dag\ket{E},\ \alpha \in \mathbb{C}\big\}.
\]
We have
\[
U_M^\dag\ket{E}=\sum_u u^* \ket{u}\braket{u}{E},
\]
where both $\ket{u}$ and $\ket{E}$ are normalized and form orthogonal bases. Due to linearity of $\K$, any vector $\ket{\psi}\in \HS_{U_M^\dag\ket{E}}$ must have a form
\[\label{eq:form_of_transformed}
\ket{\psi}=\alpha\sum_u u^* \braket{u}{E}\ket{\psi_u},
\]
where $\ket{\psi_u}\in \HS_{\ket{u}}$ is normalized so that $\K(\ket{\psi_u})=\ket{u}$. To show that, let us define projector onto $\HS_{\ket{u}}$ as $\P_u$. Then, according to Lemma~\ref{lemma:overlap} combined with Corollary~\ref{corr:observation},
\[
\tr[\P_u\ketbra{\psi_u}{\psi_u}]=\abs{\braket{u}{u}}^2=1.
\]
Considering $\P_u\ket{\psi_u}=\ket{\psi_u}$, we can rewrite this as,
\[
\abs{\braket{\psi_u}{\psi_u}}^2=1.
\]
Thus, for $\K(\ket{\psi_u})=\ket{u}$ to hold, $\ket{\psi_u}$ must be also normalized.

Consider a spectral decomposition into orthonormal eigenvectors,
\[
\P_E = \sum_k \ketbra{\psi_E^k}{\psi_E^k}
\]
By definition, we have
\[
\tilde \P_E = \sum_k U^\dag\ketbra{\psi_E^k}{\psi_E^k}U
\]
where all eigenvectors $U^\dag \ket{\psi_E^k}$ are orthogonal and normalized. Consider a single eigenvector of this operator,
\[
\begin{split}
U^\dag \ket{\psi_E^k}&=\Big(\sum_u u^*\, U_u\oplus 0_{\HS_{\ket{u}}^\perp}\Big)\ket{\psi_E^k}\\
&=\sum_u u^*\, U_u\P_u\ket{\psi_E^k}\\
&\overset{(1)}{=}\sum_u u^*\, \braket{u}{E}\frac{e^{i \phi_{uEk}}U_u\P_u\ket{\psi_E^k}}{\norm{\P_u\ket{\psi_E^k}}}\\
&=\sum_u u^*\, \braket{u}{E}\ket{\psi_{uE}^k},\\
\end{split}
\]
where $\ket{\psi_{uE}^k}\in \HS_u$ is a normalized vector. The step $(1)$ follows from the normalization: using Lemma~\ref{lemma:overlap} combined with Corollary~\ref{corr:observation}, we have
\[
\norm{U_u\P_u\ket{\psi_E^k}}^2=\tr[\P_u\ketbra{\psi_E^k}]=\abs{\braket{u}{E}}^2,
\]
and the complex phase $e^{i \phi_{uEk}}$ ensures that $\braket{u}{E}e^{i \phi_{uEk}}=\abs{\braket{u}{E}}$. Since every eigenvector $U^\dag \ket{\psi_E^k}$ of $\tilde P_E$ has a form of Eq.~\eqref{eq:form_of_transformed}, is normalized and orthogonal, and $\ket{\psi_{uE}^k}$ are normalized, they form an orthonormal basis of $\HS_{U_M^\dag\ket{E}}$. In other words, $\tilde P_E$ projects onto this subspace. 
\end{proof}

Finally, we address the second term,
\vspace{1cm}
\begingroup
\begin{widetext}
\allowdisplaybreaks
\begin{align}
&\int_{\tilde U}\tr[\ham U\R U^\dag] \nu (\tilde U)=\int_{\tilde U}\tr[\ham U^{\mathrm{ext}}\tilde U\R \tilde U^\dag U^{\mathrm{ext}\dag}] \nu (\tilde U) \notag\\
&=\int_{\tilde U}\tr \Bigg[\bigoplus_{N_+=0}^N H^{N,N_+} \bigoplus_{N_+=0}^N U^{N,N_+,\mathrm{ext}}\bigoplus_{N_+=0}^N \tilde U^{N,N_+} \bigoplus_{N_+=0}^N\lambda_{N_+} \R_{N_+} \bigoplus_{N_+=0}^N \tilde U^{N,N_+\dag} \bigoplus_{N_+=0}^N U^{N,N_+,\mathrm{ext}\dag} \Bigg]\nu (\tilde U)\notag \\
&=\int_{\tilde U}\sum_{N_+=0}^N \lambda_{N_+} \tr \Big[H^{N,N_+} U^{N,N_+,\mathrm{ext}} \tilde U^{N,N_+} \R_{N_+} \tilde U^{N,N_+\dag} U^{N,N_+,\mathrm{ext}\dag} \Big]\nu (\tilde U)\notag\\
&=\int_{\tilde U}\sum_{N_+=0}^N \lambda_{N_+} \tr \Big[\sum_E E \P_E^{N,N_+} U^{N,N_+,\mathrm{ext}} \tilde U^{N,N_+} \sum_r r \R_r^{N,N_+} \tilde U^{N,N_+\dag} U^{N,N_+,\mathrm{ext}\dag} \Big]\nu (\tilde U)\notag\\
&=\int_{\tilde U}\sum_{N_+=0}^N \lambda_{N_+} \sum_E \sum_r  E r \tr \Big[\tilde U^{N,N_+\dag}U^{N,N_+,\mathrm{ext}\dag}\P_E^{N,N_+} U^{N,N_+,\mathrm{ext}} \tilde U^{N,N_+}  \R_r^{N,N_+}   \Big]\nu (\tilde U)\notag\\
&=\int_{\tilde U}\sum_{N_+=0}^N \lambda_{N_+} \sum_E \sum_r  E r \tr \Big[\tilde\P_E^{N,N_+}  \R_r^{N,N_+}   \Big]\nu (\tilde U) \\
&\overset{(1)}{=} \int_{\tilde U} \sum_{N_+=0}^N\lambda_{N_+} \sum_E \sum_r  E r \abs{\bra{r}\tilde U_M^\dag U_M^{\mathrm{ext}\dag}\ket{E}}^2 \, \nu(\tilde U) \notag\\
&\overset{(2)}{=} \int_{\tilde U} \sum_E \sum_r  E r \abs{\bra{r}\tilde U_M^\dag U_M^{\mathrm{ext}\dag}\ket{E}}^2 \, \nu(\tilde U) \notag\\
&\overset{(3)}{=} \int_{\tilde U_M} \left( \int_{f^{-1}(\tilde U_M)} \sum_E \sum_r  E r \abs{\bra{r}\tilde U_M^\dag U_M^{\mathrm{ext}\dag}\ket{E}}^2 \, \kappa(\tilde U_M, d\tilde U) \right) \mu(\tilde U_M)\notag \\
&\overset{(4)}{=} \int_{\tilde U_M} \sum_E \sum_r  E r \abs{\bra{r}\tilde U_M^\dag U_M^{\mathrm{ext}\dag}\ket{E}}^2 \, \mu(\tilde U_M)\notag\\
&=\int_{\tilde U_M} \tr[\ham_M U_M^{\mathrm{ext}}\tilde U_M\R_M\tilde U_M^\dag U_M^{\mathrm{ext}\dag}] \mu (\tilde U_M).\notag
\end{align}
\end{widetext}
\endgroup
There we have defined coarse-graining
\[
\tilde\P_E^{N,N_+}=\tilde U^{N,N_+\dag}U^{N,N_+,\mathrm{ext}\dag}\P_E^{N,N_+} U^{N,N_+,\mathrm{ext}} \tilde U^{N,N_+},
\]
corresponding to Hamiltonian $\tilde H^{N,N_+}=\sum E \tilde\P_E^{N,N_+}$, where, according to Lemma~\ref{lemma:Utransformed_PE}, each $\tilde \P_E^{N,N_+}$ projects onto subspace 
\[
\HS_{U_M^\dag U_M^{\mathrm{ext}\dag}\ket{E}} \cap \HS_+^{N_+}\otimes \HS_-^{N-N_+}.
\]
This is because both $U$ and $U^{\mathrm{ext}}$ follow the same structure as required by the Lemma, so the statement follows by applying the Lemma twice. In step $(1)$, we have used this property together with Eq.~\eqref{eq:trace_energy}. In step $(2)$, we have summed the eigenvalues, $\sum_{N_+=0}^N\lambda_{N_+}=1$, because the summand no longer depends on $N_+$. In step $(3)$, we have explicitly substituted the measure $\nu$ using Eq.~\eqref{eq:full_measure}. In step $(4)$, we trivially integrated over the inner degrees of freedom, since the integrand does not depend on the fiber variables, and the conditional measure $\kappa$ is normalized on each fiber.

Inspecting Eqs.~\eqref{eq:extracted_work_morty3app} and~\eqref{eq:extracted_work_morty3actual}, we have proven that
\[
W_\C = W_{\C_M}.
\]
Thus, given the assumptions $(a1$--$a8)$, Morty 3 extracts exactly the amount of energy he predicts, without the knowledge of the existence of two types of particles.

The average work extracted per copy when extracting from multiple copies simultaneously, which is defined by~\cite{safranek_2023_work},
\[
\begin{split}
W_{\C}=\lim_{L\rightarrow \infty}\frac{1}{L}\int_{\tilde U} &\tr\big[\ham^L\rho^{\otimes L}\big]\\
-
&\tr\big[\ham^L U^{\mathrm{ext}}\tilde U^{\otimes L}\R^{\otimes L}\tilde U^{\otimes L\dag} U^{\mathrm{ext}\dag}\big] \mu (\tilde U),
\end{split}
\]
where $\ham^L=\ham\otimes \I\otimes \cdots+\I\otimes\ham\otimes \cdots$,
proceeds analogously.

\bibliography{main.bib}

\end{document}